\documentclass[prd,tightenlines,nofootinbib,superscriptaddress]{revtex4}

\usepackage{amsfonts,amssymb,amsthm,bbm}

\usepackage{amsmath}

\allowdisplaybreaks[3]

\usepackage{hyperref}

\usepackage{subcaption}
\captionsetup{compatibility=false}
\captionsetup[figure]{format=plain,position=top,justification=centerlast,textfont=sf,width=.9\textwidth}
\captionsetup[figure]{belowskip=12pt,aboveskip=8pt}

\usepackage{color,psfrag}
\usepackage[dvips]{graphicx}

\usepackage[dvipsnames]{xcolor}

\usepackage{tikz}
\usetikzlibrary{calc,matrix}
\usetikzlibrary{arrows,decorations.pathmorphing,decorations.pathreplacing,decorations.markings}

\tikzset{
    edge/.style={draw, postaction={decorate},
        decoration={markings,mark=at position .55 with {\arrow{>}}}},
    linking-D/.style={draw, postaction={decorate},
        decoration={markings,mark=at position 1 with { rectangle, draw, inner sep=1pt, minimum size=2mm, fill=cyan }}},
}

\usepackage{mathtools}

\newcounter{footnotemarknum}




\newcommand{\C}{{\mathbb C}}
\newcommand{\N}{{\mathbb N}}
\newcommand{\R}{{\mathbb R}}

\newcommand{\cB}{{\mathcal B}}
\newcommand{\cE}{{\mathcal E}}

\newcommand{\cL}{{\mathcal L}}
\newcommand{\cH}{{\mathcal H}}
\newcommand{\cM}{{\mathcal M}}
\newcommand{\cN}{{\mathcal N}}
\newcommand{\cR}{{\mathcal R}}

\newcommand{\cP}{{\mathcal P}}

\newcommand{\cV}{{\mathcal V}}
\newcommand{\cD}{{\mathcal D}}

\newcommand{\cS}{{\mathcal S}}

\newcommand{\cI}{{\mathcal I}}

\newcommand{\SU}{\mathrm{SU}}

\newcommand{\be}{\begin{equation}}
\newcommand{\ee}{\end{equation}}
\newcommand{\beq}{\begin{eqnarray}}
\newcommand{\eeq}{\end{eqnarray}}
\newcommand{\bes}{\begin{eqnarray}}
\newcommand{\ees}{\end{eqnarray}}

\renewcommand{\u}{{\mathfrak{u}}}
\newcommand{\su}{{\mathfrak{su}}}

\newcommand{\la}{\langle}
\newcommand{\ra}{\rangle}

\newcommand{\tr}{{\mathrm{Tr}}}
\newcommand{\f}{\frac}

\def\nn{\nonumber}
\def\pp{\partial}

\def\rd{\mathrm{d}}

\def\vphi{\varphi}
\def\eps{\epsilon}

\newcommand{\id}{\mathbb{I}}

\def\tk{\tilde{k}}
\def\tm{\tilde{m}}

\def\bGamma{{\Gamma}^{\circ}}

\def\bI{{}^{\pp}I}
\def\BI{{}^{o}I}
\def\tiota{\tilde{\iota}}

\newtheorem{theorem}{Theorem}[section]
\newtheorem{lemma}[theorem]{Lemma}
\newtheorem{prop}[theorem]{Proposition}

\newcommand*\circled[1]{\tikz[baseline=(char.base)]{
            \node[shape=circle,draw,inner sep=0.7pt] (char) {#1};}}

\begin{document}

\title{Loop Quantum Gravity's Boundary Maps}

\author{{\bf Qian Chen}}\email{chenqian.phys@gmail.com}
\affiliation{Universit\'e de Lyon, ENS de Lyon, CNRS, Laboratoire de Physique LPENSL, 69007 Lyon, France}
\author{{\bf Etera R. Livine}}\email{etera.livine@ens-lyon.fr}
\affiliation{Universit\'e de Lyon, ENS de Lyon, CNRS, Laboratoire de Physique LPENSL, 69007 Lyon, France}

\date{\today}

\begin{abstract}

In canonical quantum gravity, the presence of spatial boundaries naturally leads to a boundary quantum states, representing quantum boundary conditions for the bulk fields. As a consequence, quantum states of the bulk geometry needs to be upgraded to wave-functions valued in the boundary Hilbert space: the bulk become quantum operator acting on boundary states. We apply this to loop quantum gravity and describe spin networks with 2d boundary as wave-functions mapping bulk holonomies to spin states on the boundary. This sets the bulk-boundary relation in a clear mathematical framework, which allows to define the boundary density matrix induced by a bulk spin network states after tracing out the bulk degrees of freedom.
We ask the question of the bulk reconstruction and prove a boundary-to-bulk universal reconstruction procedure, to be understood as a purification of the mixed boundary state into a pure bulk state.
We further perform a first investigation in the algebraic structure of induced boundary density matrices and show how correlations between bulk excitations, i.e. quanta of 3d geometry, get reflected into the boundary density matrix.
\end{abstract}

\maketitle

\tableofcontents

\section*{Introduction}

%
%


Loop quantum gravity is one of the main frameworks attempting to quantize general relativity in a non-perturbative way and, in doing so, define a background independent theory of quantum gravity (for reviews, see \cite{Gaul:1999ys,Thiemann:2007pyv,Rovelli:2014ssa,Bodendorfer:2016uat}). Based on a first order reformulation of general relativity \`a la Cartan, it trades the 4-metric for  vierbein-connection fields and writes general relativity as a gauge field theory defined by the Holst-Palatini action \cite{Holst:1995pc}. Focusing on a Hamiltonian formulation of the theory, it proceeds to a 3+1 decomposition of space-time and studies the evolution in time of the space geometry. The geometry of 3d space slices is described by a pair of canonical fields, the (co-)triad and the Ashtekar-Barbero connection, which enhance the 3-metric and extrinsic curvature of the Arnowitt-Deser-Misner (ADM) formalism with a local gauge invariance under $\SU(2)$ transformations (i.e. local 3d rotations in the tangent space).
The goal is then to provide a quantization of the (suitably deformed) Dirac algebra of Hamiltonian constraints generating space-time diffeomorphisms, by
representing (a suitable algebra of observables of) the triad-connection fields on a Hilbert space carrying an action of the (suitably deformed) space-time diffeomorphisms.

In this spirit, the standard loop quantum gravity approach performs a canonical quantization of the holonomy-flux algebra, of observables smearing the Ashtekar-Barbero connection along 1d curves and the (co-)triad along 2d surfaces, and defines quantum states of geometry as polymer structures or graph-like geometries. Those spin network states represent the excitations of the Ashtekar-Barbero connection as Wilson lines in gauge field theory.
Geometric observables are raised to quantum operators acting on the Hilbert space spanned by spin networks, leading to the celebrated result of a discrete spectra for  areas and volumes \cite{Rovelli:1994ge,Ashtekar:1996eg,Ashtekar:1997fb}.

Spin networks are actually the kinematical states of the theory and the game is to describe their dynamics, i.e. their evolution in time generated by the Hamiltonian (constraints). Although a traditional point of view is to attempt to discretize, regularize and quantize the Hamiltonian constraints \cite{Thiemann:1996aw,Thiemann:1996av}, this often leads to anomalies. The formalism naturally evolved towards a path integral formulation. The resulting spinfoam models,  constructed from (extended) topological quantum field theories (TQFTs) with defects, define transition amplitudes for histories of spin networks \cite{Reisenberger:1996pu,Baez:1997zt,Barrett:1997gw,Freidel:1998pt} (see \cite{Livine:2010zx,Dupuis:2011mz,Perez:2012wv} for reviews). The formalism then evolves in a third quantization, where so-called ``group field theories'' define non-perturbative sums over random spin network histories in a similar way than matrix model partition functions define sums over random 2d discrete surfaces \cite{DePietri:1999bx,Reisenberger:2000zc,Freidel:2005qe} (see \cite{Oriti:2006se,Carrozza:2013mna,Oriti:2014uga} for reviews).

Here we take a trip back to the foundations of loop quantum gravity, to describe (spatial) boundaries. Indeed, despite a whole branch of research dedicated to the study of (quantum) black holes and (isolated) horizon boundary conditions, most work in loop quantum gravity focuses on closed space (often implicitly done by studying spin network based on closed graphs). This focus translates the idea  of the universe as a closed system with subsystems interacting with each other and its translation into the definition of a wave-function for the entire universe as done in quantum cosmology.
However, the key function now played by the holographic principle as a guide for quantum gravity has put great emphasis of the role of boundaries. Although holography, inspired from black hole entropy and the AdS/CFT correspondence, can be initially thought as an asymptotic global property, recent researches on local area-entropy relations, holographic entanglement, holographic diamonds and the investigation of quasi-local holography and gravitational edge modes for finite boundaries necessarily pushes us to include (spatial) boundaries in the description of quantum geometries, not just as mere classical boundary conditions but as legitimate quantum boundary states. This translates a shift of perspective from a global description of space(-time) as a whole to a quasi-local description where any local bounded region of space(-time) is thought as an open quantum system.

To be more concrete, the geometrical setting we wish to study is a cylinder in space-time: we consider a bounded region of space $\cR$, with the topology of a 3-ball, whose boundary $\cS=\pp\cR$ has the topology of a 2-sphere; the space-time structure is then the cylinder $\cR\times [t_{i},t_{f}]$  whose time-like boundary is the 2+1-dimensional $\cB=\cS\times[t_{i},t_{f}]$, such that the space boundary can be considered as the corner of space-time $\cS=\cB\cap\cR_{i}$, as illustrated on fig.\ref{fig:corner}. A canonical framework describes the evolution in time of the state of the 3d geometry of the space slice $\cR$. In this context, the question of holography amounts to identify the degrees of freedom of the boundary geometry on the corner $\cS$ - the gravitational edge modes\footnotemark- which will generate the boundary conditions on $\cB$ for the bulk geometry,  study how the dynamics of those edge modes propagate into the bulk and, as a consequence, understand to which extent boundary observables reflect the bulk geometry's evolution and fluctuations.
\footnotetext{
For recent works on classical edge modes for general relativity in its first order formulation in terms of connection-vierbein variables, the interested reader can see \cite{Freidel:2019ees,Freidel:2020xyx,Freidel:2020svx,Freidel:2020ayo}.}
From this perspective, the study of holography is intimately intertwined with the renormalization flow \`a la Wilson, where the coarse-graining of the dynamics of the bulk geometry in $\cR$ induces  effective dynamics and boundary theory on $\cS$, in a bulk-to-boundary process which should ultimately be dual to the boundary-to-bulk reconstruction intended by holography (see e.g. \cite{Livine:2017xww} for an early attempt to realize this scenario in loop quantum gravity).
\begin{figure}[htb]
	\centering
	\begin{tikzpicture} []

\draw[->] (-2,-0.5) -- (-2,3.5) node[above] {$t$};

\coordinate  (O1) at (0,0);
\coordinate  (O2) at (2,0.7);
\coordinate  (O3) at (4,0);
\coordinate  (O4) at (2,-1);

\coordinate  (P1) at (0.8,3);
\coordinate  (P2) at (2.7,3.7);
\coordinate  (P3) at (5.3,3);
\coordinate  (P4) at (2.6,2);

\draw[dashed,color=blue] (O1) node[left,black] {$t_i$} to[out=85,in=160] (O2);
\draw[dashed,color=blue] (O2) to[out=-20,in=95] (O3);
\draw[color=blue]        (O3) to[out=-95,in=2] (O4);
\draw[color=blue]        (O4) to[out=178,in=-80] (O1);

\draw[color=blue] (P1) node[left,black] {$t_f$} to[out=90,in=160] (P2);
\draw[color=blue] (P2) to[out=-20,in=95] (P3);
\draw[color=blue]        (P3) to[out=-90,in=2] (P4);
\draw[color=blue]        (P4) to[out=178,in=-80] (P1);

\draw        (O1) to[out=95,in=-100] (P1);
\draw        (O3) to[out=80,in=-87] node[midway,below right] {$\cB=\cS\times[t_i,t_f]$} (P3);

\node at (2,0) {$\cR$};
\node at (4.5,-0.5) {$\cS=\pp\cR$};



	\end{tikzpicture}
	\caption{Boundary and  corner:
	we consider the evolution in time of a bounded region of space $\cR$ whose spatial boundary  $\cS=\pp\cR$ defines what is called the two-dimensional corner of space-time; the evolution of the corner defines the 2+1-d boundary of the region of space-time, $\cB=\cS\times [t_{i},t_{f}]$. 
	}
	\label{fig:corner}
\end{figure}
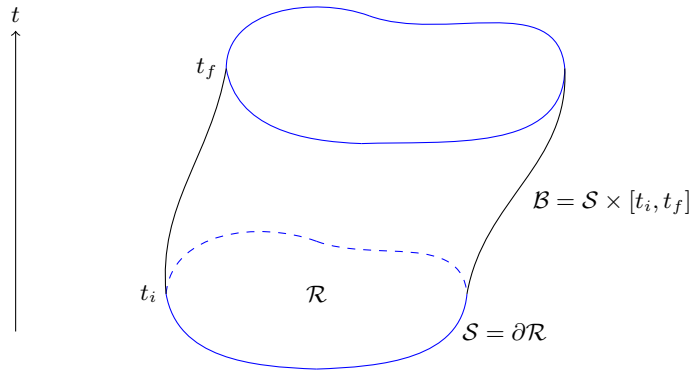

To implement this in quantum gravity, we follow a logic paralleling the hierarchy of  4d/3d/2d/1d defects and their algebraic description in a 4d TQFT, the introduction of quantum states on the boundary forces to go one level higher algebraically and define bulk states as operators (linear forms) acting on boundary states: bulk states will not simply be wave-functions valued in $\C$ but valued in  Hilbert space of boundary states.
To make things explicit, we call the boundary Hilbert space $\cH_{\pp}$ with boundary states $|\Phi_{\pp}\ra$ living on the space-time corner $\cS=\pp\cR$. A wave-function $\psi$ is a function of the bulk fields $\vphi_{bulk}$ valued in the (dual of the) boundary Hilbert space, $\psi[\vphi_{bulk}]\in\cH_{\pp}^{(*)}$, and thus defines a linear form on the boundary Hilbert space:
\be
\psi:\vphi_{bulk}\mapsto\Psi[\vphi_{bulk}] \in \cH_{\pp}^{(*)}
\,,\qquad
\la \psi[\vphi_{bulk}]\,|\,\Phi_{\pp}\ra\in \C\,.
\ee
One can then go two ways. Either we interpret these bulk wave-functions as defining a probability distribution for the bulk observables dependant on the choice of boundary states (i.e. quantum boundary conditions): once $\Phi_{\pp}$ is fixed, the function
\be
\la \Phi_{\pp}| \psi[\cdot]\ra:\,\vphi_{bulk}\mapsto \la \Phi_{\pp}| \psi[\vphi_{bulk}]\ra\in\C
\ee
is a standard $\C$-valued wave-function for the bulk fields.
Or we reverse this logic and look at the probability distribution for the boundary observables after integration over the bulk fields. In that case,
\be
\rho_{\pp}[\psi]=\int [\cD \vphi_{bulk}]\,|\psi[\vphi_{bulk}]\ra\la \psi[\vphi_{bulk}]|\in\,\textrm{End}[\cH_{\pp}]
\ee
is the density matrix induced on the boundary by the bulk state $\psi$.
The goal of this paper is to study the latter case in the framework of  loop quantum gravity and clearly  define this bulk-to-boundary coarse-graining from bulk spin networks to boundary density matrix.
This entails extending the spin network states of the 3d bulk geometry in $\cR$ to include the boundary degrees of freedom on the corner $\cS$. As we explain in the present paper, this can done in a natural way in loop quantum gravity since spin networks can be geometrically interpreted as aggregates of area quanta, glued together to create a 3d spaces from 2d excitations, and can thus be naturally extended to include the area quanta on the 2d boundary $\cS$.
A spin network wave-function on an open graph then naturally define a linear form on the Hilbert space of spin states living on the open edges of the graph, as illustrated on fig.\ref{fig:boundary} and thus induces a boundary density matrix.
\begin{figure}[htb]
	\centering
	\begin{tikzpicture} []

\coordinate  (O1) at (0,0);
\coordinate  (O2) at (2,0.7);
\coordinate  (O3) at (4,0);
\coordinate  (O4) at (2,-1);

\coordinate  (P1) at (0.8,3);
\coordinate  (P2) at (2.7,3.7);
\coordinate  (P3) at (5.3,3);
\coordinate  (P4) at (2.6,2);

\draw[dashed,color=blue] (O1) to[out=85,in=160] (O2);
\draw[dashed,color=blue] (O2) to[out=-20,in=95] (O3);
\draw[color=blue]        (O3) to[out=-95,in=2] (O4);
\draw[color=blue]        (O4) to[out=178,in=-80] (O1);

\draw[color=blue] (P1) to[out=90,in=160] (P2);
\draw[color=blue] (P2) to[out=-20,in=95] (P3);
\draw[color=blue]        (P3) to[out=-90,in=2] (P4);
\draw[color=blue]        (P4) to[out=178,in=-80] (P1);

\draw        (O1) to[out=95,in=-100] (P1);
\draw        (O3) to[out=80,in=-87] (P3);

\coordinate  (A1) at (0.4,0);
\coordinate  (A2) at (1.8,0.5);
\coordinate  (A3) at (3.4,0.2);
\coordinate  (A4) at (2.2,-0.6);
\coordinate  (A5) at (1.2,-0.6);

\draw (A1) [color=red] -- ++ (-0.6,-1);
\draw (A2) [color=red] -- ++ (-0.6,0.8);
\draw (A3) [color=red] -- ++ (0.6,0.8);
\draw (A4) [color=red] -- ++ (0.2,-0.95);
\draw (A5) [color=red] -- ++ (-0.3,-0.85);

\draw (A1) [color=green] -- (A2);
\draw (A2) [color=green] -- (A3);
\draw (A3) [color=green] -- (A4);
\draw (A4) [color=green] -- (A5);
\draw (A5) [color=green] -- (A1);
\draw (A2) [color=green] -- (A4);
\draw (A2) [color=green] -- (A5);

\node[scale=0.7,color=red] at (A1) {$\bullet$};
\node[scale=0.7,color=red] at (A2) {$\bullet$};
\node[scale=0.7,color=red] at (A3) {$\bullet$};
\node[scale=0.7,color=red] at (A4) {$\bullet$};
\node[scale=0.7,color=red] at (A5) {$\bullet$};

\coordinate  (B1) at (1.5,3.2);
\coordinate  (B2) at (2.6,3.5);
\coordinate  (B3) at (4.4,3.3);
\coordinate  (B4) at (3.2,2.4);
\coordinate  (B5) at (2.2,2.4);

\draw (B1) [color=red] -- ++ (-0.9,0.8);
\draw (B2) [color=red] -- ++ (-0.3,0.8);
\draw (B3) [color=red] -- ++ (0.6,0.8);
\draw (B4) [color=red] -- ++ (0.2,-0.9);
\draw (B5) [color=red] -- ++ (-0.3,-0.9);

\draw (B1) [color=green] -- (B2);
\draw (B2) [color=green] -- (B3);
\draw (B3) [color=green] -- (B4);
\draw (B4) [color=green] -- (B5);
\draw (B5) [color=green] -- (B1);
\draw (B2) [color=green] -- (B4);
\draw (B2) [color=green] -- (B5);

\node[scale=0.7,color=red] at (B1) {$\bullet$};
\node[scale=0.7,color=red] at (B2) {$\bullet$};
\node[scale=0.7,color=red] at (B3) {$\bullet$};
\node[scale=0.7,color=red] at (B4) {$\bullet$};
\node[scale=0.7,color=red] at (B5) {$\bullet$};

	\end{tikzpicture}
	\caption{Spin network with a boundary: on each spatial slice, the embedded graph $\Gamma$ punctures the boundary surface of the bounded region of space $\cR$; we distinguish the boundary edges $e\in \pp\Gamma$ in red and the bulk edges $e\in\bGamma$ in green; the spin network defines a wave-function for the holonomies living on the bulk edges valued in the Hilbert space attached to the open ends of the boundary edges.
	}
	\label{fig:boundary}
\end{figure}
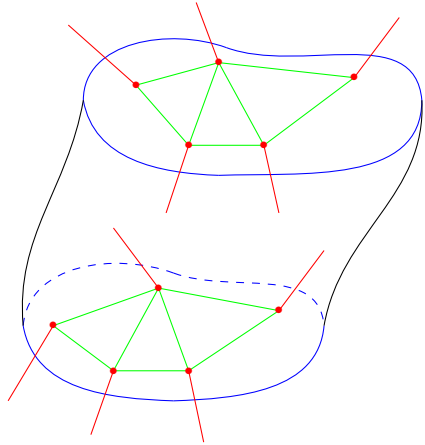
%


%



\medskip


The first section of this paper starts with a quick review of spin network quantum states for the 3d bulk geometry in loop quantum gravity. Then adapting this definition to the case of spatial boundaries, represented as open edges, we show that spin network wave-functions are actually valued in the boundary Hilbert space, i.e. they are functions of bulk $\SU(2)$ holonomies mapping them onto boundary spin states. The boundary spin states can be understood as quantum boundary conditions. In operational terms, the bulk spin network can be interpreted as a quantum circuit acting on the boundary spins. 
This opens the door to two directions. Either we sum over boundary states and obtain the probability distribution for the bulk holonomy. Or we can integrate over the bulk holonomies and obtain the {\it boundary density matrix} defining the distribution of boundary states induced by the bulk spin network. This boundary density matrix can be interpreted as a bulk-to-boundary coarse-graining of the spin network state of quantum geometry.

Section II is dedicated to the analysis of boundary density matrices induced by spin network states on fixed graphs and to a first study of their algebraic structure and properties. Our most important result is a universal bulk reconstruction procedure: starting from a gauge-invariant density matrix on the boundary Hilbert space, we show that one can always obtain it as the induced boundary density matrix of a spin network state on the bulk graph with a single vertex connected to all the boundary edges and to a single bulk loop. This can be understood as a purification result, since it shows how an arbitrary gauge-invariant mixed state on the boundary can be lifted to a pure bulk spin network state.
We then go on investigating the finer structure of the induced boundary density matrices in terms of boundary vertices and bouquets of boundary edges.

Section III finally presents explicit examples with the candy graphs, made of two vertices connected with bulk links, with four boundary edges and then with six boundary edges. This illustrates the various levels of mixed states one can obtain on the boundary in loop quantum gravity.

\section{Spin Networks as Boundary Maps}

For globally hyperbolic four-dimensional space-times $\cM=\Sigma\times\R$ with closed three-dimensional spatial slices $\Sigma$,
loop quantum gravity (LQG) defines quantum states of geometry and describes their constrained evolution in time. A state of 3d geometry are defined by  a closed oriented graph $\Gamma$ and a wave-function $\psi$ on it. This wave-function depends on one $\SU(2)$ group element for each edge $e$ of the graph, $g_{e}\in\SU(2)$, and is assumed to be invariant under the $\SU(2)$-action at each vertex $v$ of the graph:
\be \label{gauge transformation}
\psi(\{g_{e}\}_{e\in\Gamma})
=
\la\{g_{e}\}_{e\in\Gamma} | \psi\ra
=
\psi(\{h_{t(e)}g_{e}h_{s(e)}^{-1}\}_{e\in\Gamma})\,\quad
\forall h_{v}\in\SU(2)\,
\ee
where $t(e)$ and $s(e)$ respectively refer to the target and source vertices of the edge $e$. We write $E$ and $V$ respectively for the number of edges and vertices of the considered graph $\Gamma$.
The scalar product between such wave-functions is given by the Haar measure on the Lie group $\SU(2)$:
\be
\la \psi|\widetilde{\psi}\ra
=\int_{\SU(2)^{{\times E}}}\prod_{e}\rd g_{e}\,
\overline{\psi(\{g_{e}\}_{e\in\Gamma})}\,\widetilde{\psi}(\{g_{e}\}_{e\in\Gamma})
\,.
\ee
The Hilbert space of quantum states with support on the graph $\Gamma$ is thus realized as a space of square-integrable functions, $\cH_{\Gamma}=L^{2}(\SU(2)^{{\times E}}/\SU(2)^{{\times V}})$.

A basis of this Hilbert space can be constructed using the spin decomposition of $L^{2}$ functions on the Lie group $\SU(2)$ according to the Peter-Weyl theorem. A {\it spin} $j\in\f\N2$ defines an irreducible unitary representation of $\SU(2)$, with the action of $\SU(2)$ group elements realized on a $(2j+1)$-dimensional Hilbert space $\cV_{j}$. We use the standard orthonormal basis $|j,m\ra$, labeled by the spin $j$ and the magnetic index $m$ running by integer steps from $-j$ to $+j$, which diagonalizes the $\SU(2)$ Casimir $\vec{J}^{2}$ and the $\u(1)$ generator $J_{z}$. Group elements $g$ are then represented by the $(2j+1)\times (2j+1)$ Wigner matrices $D^{j}(g)$:
\be
D^{j}_{mm'}(g)=\la j,m|g|j,m'\ra\,,\qquad
\overline{D^{j}_{mm'}(g)}
=
D^{j}_{m'm}(g^{-1})
\,.
\ee
These Wigner matrices form an orthogonal basis of $L^{2}(\SU(2))$:
\be
\int_{\SU(2)}\rd g\,\overline{D^{j}_{ab}(g)}\,{D^{k}_{cd}(g)}
=
\int_{\SU(2)}\rd g\,\overline{D^{j}_{ba}(g^{-1})}\,{D^{k}_{cd}(g)}
=
\f{\delta_{jk}}{2j+1}\delta_{ac}\delta_{bd}
\,,\qquad
\delta(g)
=\sum_{j\in\f\N2}(2j+1)\chi_{j}(g)
\,, \label{eq:Peter-Weyl}
\ee
where $\chi_{j}$ is the spin-$j$ character defined as the trace of the Wigner matrix, $\chi_{j}(g)=\tr D^{j}(g)=\la j,m|g|j,m\ra$.
Applying this to gauge-invariant wave-functions allows to build the {\it spin network} basis states of $\cH_{\Gamma}$, which depend on one spin $j_{e}$ on each edge and one intertwiner $I_{v}$ at each vertex:
\be
\Psi_{\{j_{e},I_{v}\}}(\{g_{e}\}_{e\in\Gamma})
=
\la\{g_{e}\}) | \{j_{e},I_{v}\}\ra
=
\sum_{m_{e}^{t,s}}
\prod_{e}\sqrt{2j_{e}+1}\,\la j_{e}m_{e}^{t}|g_{e}|j_{e}m_{e}^{s}\ra
\,\prod_{v} \la \bigotimes_{e|\,v=s(e)} j_{e}m_{e}^{s}|\,I_{v}\,|\bigotimes_{e|\,v=t(e)}j_{e}m_{e}^{t}\ra
\,.
\ee
\begin{figure}[!htb]
	\centering
	\begin{tikzpicture} [scale=1.2]
\coordinate  (O) at (0,0);

\node[scale=0.7] at (O) {$\bullet$} node[below] {$I_v$};

\draw (O)  --  node[midway,sloped]{$>$} ++ (1,1) node[right] {$j_1, m_1$};

\draw (O)  to[bend left=20]  node[midway,sloped]{$<$} ++ (0.9,-0.9) node[right] {$j_2, m_2 $};

\draw (O)  to[bend left=20] node[midway,sloped]{$>$} ++ (0,1.5) node[above] {$j_5, m_5$};

\draw (O)  to[bend left=10]  node[midway,sloped]{$<$} ++ (-1.2,-0.5) node[left] {$j_3, m_3$};

\draw (O)  to[bend left=10]  node[midway,sloped]{$>$} ++ (-1.1,0.6) node[left] {$j_4, m_4$};

	\end{tikzpicture}
	\caption{A five-valent intertwiner $I_v$ at vertex $v$ is a $\SU(2)$-invariant map from the tensor product of the incoming spins (living on the edges $e$ whose target is $v$) to the outgoing spins (living on the edges $e$ whose source is $v$), its matrix elements are  $\la (j_{1},m_{1})(j_{3},m_{3})(j_{5},m_{5})|I_{v}|(j_{2},m_{2})(j_{4},m_{4})\ra$ in the standard spin basis labeled by the spin $j$ and the magnetic moment $m$.}
	\label{fig:intertwiner}
\end{figure}
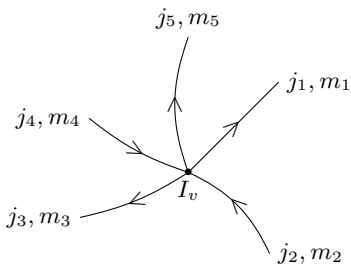
As illustrated on fig.\ref{fig:intertwiner}, an {\it intertwiner} is a $\SU(2)$-invariant  state -or singlet- living in the tensor product of the incoming and outgoing spins at the vertex $v$:
\be
I_{v}\in\textrm{Inv}_{\SU(2)}\Big{[}
\bigotimes_{e|\,v=s(e)} V_{j_{e}}
\otimes
\bigotimes_{e|\,v=t(e)} V_{j_{e}}^{*}
\Big{]}
\,.
\ee
The scalar product between two spin network states based on the same graph $\Gamma$ is then given by the product of the scalar products between their intertwiners:
\be
\la \Psi_{\{j_{e},I_{v}\}}|\Psi_{\{\tilde{j}_{e},\tilde{I}_{v}\}} \ra
=
\la {\{j_{e},I_{v}\}}| {\{\tilde{j}_{e},\tilde{I}_{v}\}}\ra
=
\prod_{e}\delta_{j_{e},\tilde{j}_{e}}
\,\prod_{v}\la I_{v}|\tilde{I}_{v}\ra
\,.
\ee

Loop quantum gravity is formulated on the full Hilbert space of spin network states as a sum over all graphs $\Gamma$ of the Hilbert spaces $\cH_{\Gamma}$ , defined as a projective limit taking into account in a consistent way the inclusion of subgraphs into larger graphs \cite{Ashtekar:1994mh,Thiemann:2007zz}. Then we construct observables as operators either on fixed graphs or allowing transitions between graphs, and we define the dynamics through transition amplitudes, obtained either by suitably regularized Hamiltonian constraint operators \cite{Thiemann:1996aw,Thiemann:1996av,Borissov:1997ji,Gaul:2000ba,Assanioussi:2015gka} or by spinfoam state-sum models inspired from topological field theory \cite{Reisenberger:1996pu,Baez:1997zt,Livine:2010zx,Perez:2012wv}.

In the present work, we are interested in the generalization of the framework to spatial slices with boundaries. As we explain below, such a spatial boundary $\cB=\pp\Sigma$, often referred to as {\it corners} (between space and time) as illustrated on fig.\ref{fig:corner}, is taken into account by extending the definition of spin networks to graphs with open edges.

\subsection{Corners, boundary states and maps}

We consider introducing spin networks on a bounded spatial slice similar to taking a bounded subset of a spin network state. As illustrated on fig.\ref{fig:boundary}, this means considering a graph $\Gamma$ with open edges $e\in\pp\Gamma$ puncturing the boundary surface. We do not endow the boundary with extra structure, representing the 2d boundary intrinsic geometry as in \cite{Freidel:2016bxd,Freidel:2018pvm,Freidel:2019ees} or locality on the boundary as in \cite{Feller:2017ejs}, but discuss the minimal boundary structure.
\begin{figure}[htb]
	\centering
	\begin{tikzpicture} []

\coordinate  (O1) at (0,0);
\coordinate  (O2) at (2,0.7);
\coordinate  (O3) at (4,0);
\coordinate  (O4) at (2,-1);

\coordinate  (P1) at (0.8,3);
\coordinate  (P2) at (2.7,3.7);
\coordinate  (P3) at (5.3,3);
\coordinate  (P4) at (2.6,2);

\draw[dashed,color=blue] (O1) to[out=85,in=160] (O2);
\draw[dashed,color=blue] (O2) to[out=-20,in=95] (O3);
\draw[color=blue]        (O3) to[out=-95,in=2] (O4);
\draw[color=blue]        (O4) to[out=178,in=-80] (O1);

\draw[color=blue] (P1) to[out=90,in=160] (P2);
\draw[color=blue] (P2) to[out=-20,in=95] (P3);
\draw[color=blue]        (P3) to[out=-90,in=2] (P4);
\draw[color=blue]        (P4) to[out=178,in=-80] (P1);

\draw        (O1) to[out=95,in=-100] (P1);
\draw        (O3) to[out=80,in=-87] (P3);

\coordinate  (A1) at (0.4,0);
\coordinate  (A2) at (1.8,0.5);
\coordinate  (A3) at (3.4,0.2);
\coordinate  (A4) at (2.2,-0.6);
\coordinate  (A5) at (1.2,-0.6);

\draw (A1) [color=red] -- ++ (-0.6,-1) node[below]{$e_{1}\in\pp\Gamma$};
\draw (A2) [color=red] -- ++ (-0.6,0.8) node[above]{$e_{5}$};
\draw (A3) [color=red] -- ++ (0.6,0.8)node[above]{$e_{4}$};
\draw (A4) [color=red] -- ++ (0.2,-0.95)node[below]{$e_{3}$};
\draw (A5) [color=red] -- ++ (-0.3,-0.85)node[below]{$e_{2}$};

\draw (A1) [color=green] -- (A2);
\draw (A2) [color=green] -- (A3);
\draw (A3) [color=green] -- (A4);
\draw (A4) [color=green] -- (A5);
\draw (A5) [color=green] -- (A1);
\draw (A2) [color=green] -- (A4);
\draw (A2) [color=green] -- (A5);

\node[scale=0.7,color=red] at (A1) {$\bullet$};
\node[scale=0.7,color=red] at (A2) {$\bullet$};
\node[scale=0.7,color=red] at (A3) {$\bullet$};
\node[scale=0.7,color=red] at (A4) {$\bullet$};
\node[scale=0.7,color=red] at (A5) {$\bullet$};

\coordinate  (B1) at (1.5,3.2);
\coordinate  (B2) at (2.6,3.5);
\coordinate  (B3) at (4.4,3.3);
\coordinate  (B4) at (3.2,2.4);
\coordinate  (B5) at (2.2,2.4);

\draw (B1) [color=red] -- ++ (-0.9,0.8);
\draw (B2) [color=red] -- ++ (-0.3,0.8);
\draw (B3) [color=red] -- ++ (0.6,0.8);
\draw (B4) [color=red] -- ++ (0.2,-0.9);
\draw (B5) [color=red] -- ++ (-0.3,-0.9);

\node[scale=0.7,color=red] at (B1) {$\bullet$};
\node[scale=0.7,color=red] at (B2) {$\bullet$};
\node[scale=0.7,color=red] at (B3) {$\bullet$};
\node[scale=0.7,color=red] at (B4) {$\bullet$};
\node[scale=0.7,color=red] at (B5) {$\bullet$};

\draw (B1) [color=green] -- (B2);
\draw (B2) [color=green] -- (B3);
\draw (B3) [color=green] -- (B4);
\draw (B4) [color=green] -- (B5);
\draw (B5) [color=green] -- (B1);
\draw (B2) [color=green] -- (B4);
\draw (B2) [color=green] -- (B5);

	\end{tikzpicture}
	\caption{On each spatial slice, the boundary states consist in the tensor product of spin states living on the boundary edges of the  spin network: $\cH^{\pp}_{\Gamma}=\bigotimes_{e\in\pp\Gamma}\bigoplus_{j_{e}}\cV_{j_{e}}$.}
	\label{fig:boundary}
\end{figure}
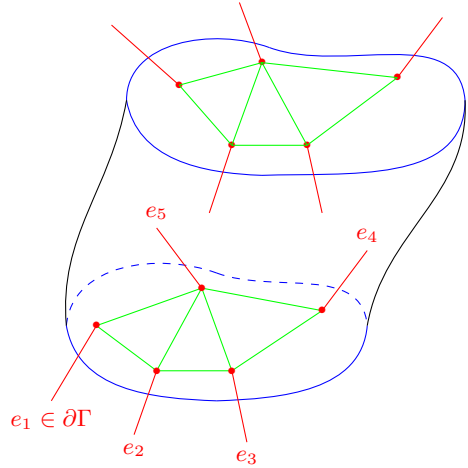

Each boundary edge $e\in\pp\Gamma$ carries a spin $j_{e}$ and a vector in the corresponding representation $v_{e}\in\cV_{j_{e}}$. This defines the boundary Hilbert space as:
\be
\cH^{\{j_{e}\}_{e\in\pp\Gamma}}_{\Gamma}
=
\bigotimes_{e\in\pp\Gamma}\cV_{j_{e}}
\,.
\ee \label{eq:boundary-state}
One does not need to fix the spins carried by the boundary edges and can consider the larger boundary Hilbert space:
\be
\cH^{\pp}_{\Gamma}
=\bigoplus_{\{j_{e}\}}\cH^{\{j_{e}\}_{e\in\pp\Gamma}}_{\Gamma}
=\bigotimes_{e\in\pp\Gamma}\cV_{e}
\qquad\textrm{with}\quad
\cV=\bigoplus_{j}\cV_{j}\,.
\ee
Using the Schwinger realization of the $\su(2)$ Lie algebra in terms of a pair of quantum oscillators, the Hilbert space $\cV$ is the tensor product of two copies of the harmonic oscillator Hilbert space, which can be understood as (holomorphic) wave-functions of a spinor, i.e. a complex 2-vector \cite{Freidel:2009ck,Borja:2010rc,Livine:2011gp,Dupuis:2012vp,Livine:2013zha,Alesci:2015yqa,Bianchi:2015fra,Bianchi:2016tmw}.

Calling $\bGamma=\Gamma\setminus\pp\Gamma$ the bulk or interior of the graph $\Gamma$, a spin network wave-function on the graph $\Gamma$ with boundary is still a function of group elements living on bulk edges $e\in\Gamma\setminus\pp\Gamma$, but is not anymore valued in the field $\C$ but into the boundary Hilbert space $\cH^{\pp}_{\Gamma}$:
\be
\psi(\{g_{e}\}_{e\in\bGamma})\,\in\,\cH^{\pp}_{\Gamma}
\,.
\ee
The scalar product between wave-functions is inherited from the inner product between boundary states:
\be \label{eq:Definition-InnerProduct}
\la \psi|\widetilde{\psi}\ra
=
\int
\prod_{e\in\bGamma}\rd g_{e}\,
\la \psi(\{g_{e}\}_{e\in\bGamma})|\widetilde{\psi}(\{g_{e}\}_{e\in\bGamma})\ra
\,,
\ee
with the normalization of wave-functions reading as:
\be
\la \psi|{\psi}\ra
=
\int
\prod_{e\in\bGamma}\rd g_{e}\,
\la \psi(\{g_{e}\}_{e\in\bGamma})|{\psi}(\{g_{e}\}_{e\in\bGamma})\ra
=1
\,.
\ee

To be more precise, it should actually be considered as a linear form on the boundary Hilbert space and thus live in the dual Hilbert space, $\psi(\{g_{e}\}_{e\in\bGamma})\,\in\,(\cH^{\pp}_{\Gamma})^{*}$.
This means that it defines a distribution on boundary states depending on the group elements, or holonomies, living in the bulk:
\be
\forall \Phi\in \cH^{\pp}_{\Gamma}
\,,\quad
\la \psi(\{g_{e}\}_{e\in\bGamma})\,|\,\Phi\ra \,\,\in\C
\,.
\ee
In simpler words, a spin network state is now a map on boundary states (or corner states), which we will loosely refer to as a boundary map.

\medskip

The statement of gauge invariance also has to take into account the boundary: the wave-function will be invariant with respect to bulk gauge transformations  while it will be covariant under gauge transformations on the boundary.
More precisely, we distinguish bulk vertices $v\in V^{o}$ that are not connected to any boundary edge and boundary vertices $v\in V_{\pp}$ that are attached to at least one boundary edge. The wave-function is assumed to be invariant under $\SU(2)$ transformations acting at bulk vertices, while $\SU(2)$ transformations acting at boundary vertices will act on the spin states dressing the boundary edges:
\be
|\psi(\{h_{t(e)}g_{e}h_{s(e)}^{-1}\})\ra
=
\left(\bigotimes_{e\pp\Gamma} h_{v(e)}^{\eps_{e}^{v}}\right)
\,|\psi(\{g_{e}\})\ra
\,
\ee
where $v(e)$ for $e\in\pp\Gamma$ denotes the vertex to which the boundary edge is attached and $\eps_{e}^{v}=1$ is the boundary edge is outgoing ($v(e)=s(e)$) while $\eps_{e}^{v}=-1$ is the boundary edge is incoming ($v(e)=t(e)$).

\medskip

The definition of the spin network basis states can then be adapted to the case with boundary:
\beq \label{eq:spin-network-with-boundary}
\Psi_{\{j_{e},I_{v}\}}(\{g_{e}\}_{e\in\bGamma})
&=&
\sum_{m_{e}^{t,s}}
\prod_{e\in\bGamma}\sqrt{2j_{e}+1}\,\la j_{e}m_{e}^{t}|g_{e}|j_{e}m_{e}^{s}\ra
\,\prod_{v} \la \bigotimes_{e\in\bGamma|\,v=s(e)} j_{e}m_{e}^{s}
|\,I_{v}\,|
\bigotimes_{e\in\bGamma|\,v=t(e)} j_{e}m_{e}^{t}\ra
\\
&\in&
\bigotimes_{\substack{e\in\pp\Gamma\\ t(e)\in\Gamma}} \cV_{j_{e}}^*
\,\otimes\,
\bigotimes_{\substack{e\in\pp\Gamma\\ s(e)\in\Gamma}} \cV_{j_{e}}
\nn
\,.
\eeq
We sum over the magnetic indices $m$'s only for the bulk edges. The spin states on the boundary edges are not contracted, so that the wave-function $\Psi_{\{j_{e},I_{v}\}}$ is valued in the boundary Hilbert space $\cH^\pp_{\Gamma}$.
This can be made more explicit  by writing the wave-function $\psi$ as a tensor by evaluating on a basis of boundary states,
\be
\psi^{\{j_{e},m_{e}\}_{e\in\pp\Gamma}}
=
\la \otimes_{e\in\pp\Gamma}j_{e},m_{e}\,|\,\psi\ra
\,.
\ee
Assuming that boundary edges are outgoing for the sake of simplicity, this gives for spin network basis states:
\beq
\Psi_{\{j_{e},I_{v}\}}(\{g_{e}\})^{\{j_{e},m_{e}^{s}\}_{e\in\pp\Gamma}}
&=&
\la \otimes_{e\in\pp\Gamma}j_{e},m_{e}^{s}\,|\,\Psi_{\{j_{e},I_{v}\}}(\{g_{e}\})\ra
\\
&=&
\sum_{m_{e}^{t,s}}
\prod_{e\in\bGamma}\sqrt{2j_{e}+1}\,\la j_{e}m_{e}^{t}|g_{e}|j_{e}m_{e}^{s}\ra
\,\prod_{v} \la \bigotimes_{e\in\Gamma|\,v=s(e)} j_{e}m_{e}^{s}
|\,I_{v}\,|
\bigotimes_{e\in\bGamma|\,v=t(e)} j_{e}m_{e}^{t}\ra
\nn
\eeq
The scalar product between those wave-functions is given by the scalar product of the bulk intertwiner as for the no-boundary case:
\beq
\la \Psi_{\{j_{e},I_{v}\}}|\Psi_{\{\tilde{j}_{e},\tilde{I}_{v}\}} \ra
&=&
\sum_{\{k_{e},m_{e}\}}
\overline{\Psi_{\{j_{e},I_{v}\}}(\{g_{e}\})^{\{k_{e},m_{e}\}}}
\,\Psi_{\{\tilde{j}_{e},\tilde{I}_{v}\}}(\{g_{e}\})^{\{k_{e},m_{e}\}_{e\in\pp\Gamma}}
\nn\\
&=&
\prod_{e}\delta_{j_{e},\tilde{j}_{e}}
\,\prod_{v}\la I_{v}|\tilde{I}_{v}\ra
\,.
\eeq

\subsection{Bulk probability}

Now that the bulk wave-function has been promoted to a map from bulk degrees of freedom to boundary state, in a logic following (Atiyah's axiomatization of) topological field theories, the corresponding probability distribution for the bulk fields is given by the boundary space scalar product instead of the mere squared modulus:
\be
\psi(\{g_{e}\}_{e\in\bGamma})\,\in\,\cH^{\pp}_{\Gamma}
\,,
\qquad
\cP(\{g_{e}\}_{e\in\bGamma})=
\la \psi(\{g_{e}\}_{e\in\bGamma})|{\psi}(\{g_{e}\}_{e\in\bGamma})\ra
\,.
\ee
\begin{figure}[htb]
\centering
\begin{tikzpicture}[scale=0.7]
\coordinate (O) at (-5,0);
\path (O) ++(160:2) coordinate (O1);
\path (O) ++(120:2) coordinate (O2);
\path (O) ++(80:2) coordinate (O3);
\path (O) ++(40:2) coordinate (O4);

\draw[thick] (O1) to[bend right=30] (O4);
\draw[thick] (O1) to[bend right=30] (O3);

\draw[thick,red] (O1) -- ++(160:1) ++(160:0.35) node {$j_1$};
\draw[thick,red] (O2) -- ++(120:1) ++(120:0.35) node {$j_2$};
\draw[thick,red] (O3) -- ++(80:1) ++(80:0.35) node {$j_3$};
\draw[thick,red] (O4) -- ++(40:1) ++(40:0.35) node {$j_4$};

   \draw [green,thick,domain=40:160] plot ({-5+2*cos(\x)}, {2*sin(\x)});

   \draw [thick,domain=160:400] plot ({-5+2*cos(\x)}, {2*sin(\x)});
   
   \draw[->,>=stealth,very thick] (-2,0) -- node [midway, above] {gluing its copy} (2,0);
   
\draw (O1) node[scale=0.7,red] {$\bullet$};
\draw (O2) node[scale=0.7,red] {$\bullet$};
\draw (O3) node[scale=0.7,red] {$\bullet$};
\draw (O4) node[scale=0.7,red] {$\bullet$};

\coordinate (A) at (5,0);
\coordinate (B) at (11,0);
\path (A) ++(60:2) coordinate (A1);
\path (A) ++(20:2) coordinate (A2);
\path (A) ++(-20:2) coordinate (A3);
\path (A) ++(-60:2) coordinate (A4);

\path (B) ++(120:2) coordinate (B1);
\path (B) ++(160:2) coordinate (B2);
\path (B) ++(200:2) coordinate (B3);
\path (B) ++(240:2) coordinate (B4);

   \draw [green,thick,domain=120:240] plot ({11+2*cos(\x)}, {2*sin(\x)});
   \draw [green,thick,domain=-60:60] plot ({5+2*cos(\x)}, {2*sin(\x)});

   \draw [thick,domain=240:480] plot ({11+2*cos(\x)}, {2*sin(\x)});
   \draw [thick,domain=60:300] plot ({5+2*cos(\x)}, {2*sin(\x)});

\draw[red,thick] (A1) -- node[above,midway,scale=0.7] {$j_1$} (B1);
\draw[red,thick] (A2) -- node[above,midway,scale=0.7] {$j_2$} (B2);
\draw[red,thick] (A3) -- node[above,midway,scale=0.7] {$j_3$} (B3);
\draw[red,thick] (A4) -- node[above,midway,scale=0.7] {$j_4$} (B4);

\draw[thick] (B1) to[bend left=30] (B4);
\draw[thick] (B1) to[bend left=30] (B3);
\draw[thick] (A1) to[bend right=30] (A4);
\draw[thick] (A1) to[bend right=30] (A3);

\draw (A1) node[scale=0.7,red] {$\bullet$};
\draw (A2) node[scale=0.7,red] {$\bullet$};
\draw (A3) node[scale=0.7,red] {$\bullet$};
\draw (A4) node[scale=0.7,red] {$\bullet$};

\draw (B1) node[scale=0.7,red] {$\bullet$};
\draw (B2) node[scale=0.7,red] {$\bullet$};
\draw (B3) node[scale=0.7,red] {$\bullet$};
\draw (B4) node[scale=0.7,red] {$\bullet$};

\end{tikzpicture}
\caption{
Gluing the two copies of the spin network into the boundary density matrix: boundary edges (red lines) are glued together using the boundary space scalar product, and for each copy; the maximal tree for the bulk gauge fixing consist in the green edges, while the remaining edges, in black, define the independent loops of the bulk graph .
} \label{fig:GluingBoundaryEdges}
\end{figure}
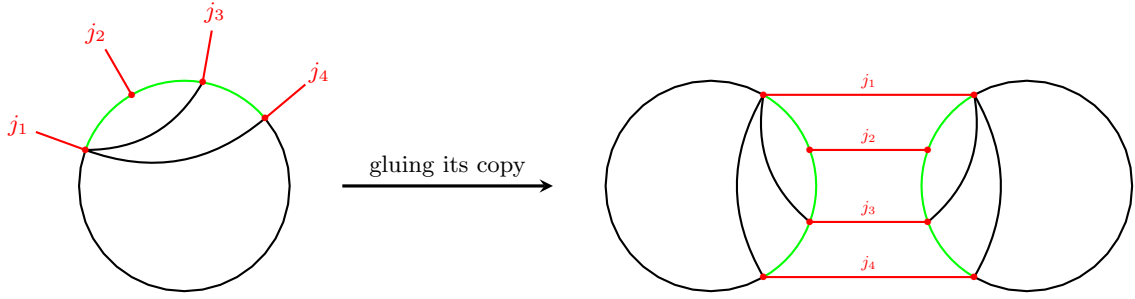

As illustrated on fig.\ref{fig:GluingBoundaryEdges}, we are gluing two copies of the spin network with trivial holonomies along the open edges on the boundary.
This yields a totally gauge-invariant probability distribution, despite the gauge covariance of the waver-function under boundary gauge transformations:
\be
\cP(\{g_{e}\}_{e\in\Gamma})
=
\cP(\{h_{t(e)}g_{e}h_{s(e)}^{-1}\}_{e\in\Gamma})\,\quad
\forall h_{v}\in\SU(2)^{V}
\,,
\ee
with no difference between bulk and boundary vertices or edges.

Following the earlier work on spin networks \cite{Freidel:2002xb} and subsequent works \cite{Livine:2006xk,Livine:2007sy,Livine:2008iq,Livine:2013gna, Charles:2016xwc, Anza:2016fix}, we can gauge-fix  this gauge invariance down to a single $\SU(2)$ action.
To this purpose, one chooses an arbitrary  root vertex $v_{0}\in\Gamma$ and a maximal tree in the bulk graph $T\subset\Gamma^{o}$. A tree is a set of edges that never form any cycle (or loop). A maximal tree $T$ has $(V-1)$ edges. Furthermore, for any vertex $v\in\Gamma$, it defines a unique path of edges $P[v_{0}\rightarrow v]\subset T$ along the tree linking the root vertex $v_{0}$ to the vertex $v$. This allows to gauge-fix all the group elements along tree edges to the identity,  $g_{e\in T}\mapsto\id$, by choosing gauge transformations $h_{v}$ at every vertex but the root vertex as:
\be
h_{v}=\left(\overleftarrow{\prod_{\ell\in P[v_{0}\rightarrow v]}} g_{\ell}\right)^{-1}\,,
\ee
where the product of group elements is taken from right to left over $g_{\ell}$ if the edge $\ell$ is oriented in the same direction than the path $P[v_{0}\rightarrow v]$ and over its inverse $g_{\ell}^{-1}$ otherwise.
This maps all the group elements on tree edges to the identity, $h_{t(e)}g_{e}h_{s(e)}^{-1}=\id$ for $e\in T$. The remaining edges, which do not belong the tree actually correspond to a minimal generating set of loops (or cycles) on the bulk graph $\Gamma^{o}$. Indeed, each non-tree edge defines a loop from the root vertex to the edge and back,
\be
\cL_{e\notin T}:v_{0}\underset{T}{\rightarrow}s(e)\underset{e}{\rightarrow}t(e)\underset{T}{\rightarrow}v_{0}
\,.\nn
\ee
There are $L=E-V+1$  edges not belonging to $T$, defining $L$ such loops. One can show that every cycle on the bulk graph $\Gamma^{o}$ can generated from those cycles.
For $e\notin T$, the gauge transformation built above does not map the group element $g_{e}$ to the identity anymore but maps it to the holonomy around the corresponding loop,
\be
\forall e\notin T\,,\qquad
h_{t(e)}g_{e}h_{s(e)}^{-1}
=
\overleftarrow{\prod_{\ell\in \cL_{e}}} g_{\ell}
\equiv
G_{e}
\,.\nn
\ee
As a consequence, the bulk probability distribution depends only on those $L$ group elements:
\be
\cP(\{g_{e}\}_{e\in\Gamma})
=
\cP(\{G_{e}\}_{e\notin T},\{\id\}_{e\in T})
\equiv
\cP_{GF}(G_{1},..,G_{L})
\,.
\ee
Putting aside the gauge-fixed group elements living on the tree edges and focusing on the non-trivial loop holonomies, this gauge-fixed  bulk probability $\cP_{GF}$ is still invariant under gauge transformation at the root vertex $v_{0}$:
\be
\cP_{GF}(G_{1},..,G_{L})=
\cP_{GF}(h \, G_1 \, h^{-1},\cdots,h \, G_L \, h^{-1})
\,, \quad \forall \, h \in \SU(2)
\,.
\ee
This directly implies two simple results:
\begin{prop} \label{theorem:ExtremalPoint}
The configuration $G_1=\cdots=G_L=\id$, representing a flat $\SU(2)$ connection, is always a stationary point for the bulk probability function $\cP(\{g_{e}\})=\la {\psi}(\{g_{e}\}_{e\in\bGamma}) | {\psi}(\{g_{e}\}_{e\in\bGamma})\ra$.
\end{prop}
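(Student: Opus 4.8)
The plan is to exploit the residual gauge invariance established just above the proposition, namely that $\cP_{GF}$ is invariant under the simultaneous adjoint action $(G_1,\dots,G_L)\mapsto(hG_1h^{-1},\dots,hG_Lh^{-1})$ for all $h\in\SU(2)$. The configuration $G_1=\dots=G_L=\id$ is a fixed point of this action, and the heuristic is that a smooth invariant function must have a vanishing gradient at a fixed point of the symmetry, provided the symmetry leaves no invariant direction in the tangent space there. I would turn this heuristic into a rigorous statement by linearizing the action at the fixed point.

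Concretely, first I would introduce exponential coordinates near the identity, writing $G_i=\exp\xi_i$ with $\xi_i\in\su(2)$, so that a neighborhood of $(\id,\dots,\id)$ is modeled on $\su(2)^{\oplus L}$. Conjugation by $h$ then acts as $\xi_i\mapsto \mathrm{Ad}_h\,\xi_i$, i.e.\ as the diagonal adjoint representation on $\su(2)^{\oplus L}$. Second, I would differentiate the invariance identity $\cP_{GF}\circ\Phi_h=\cP_{GF}$, where $\Phi_h$ denotes conjugation by $h$, at the fixed point $p=(\id,\dots,\id)$. Using $\Phi_h(p)=p$ and the chain rule, this gives $d\cP_{GF}|_p\circ \mathrm{Ad}_h^{\oplus L}=d\cP_{GF}|_p$ for every $h$, so the covector $d\cP_{GF}|_p$ is fixed by the diagonal adjoint action.

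The key ingredient, and the only genuinely non-trivial step, is then a short representation-theoretic observation: the adjoint representation of $\SU(2)$ on $\su(2)\cong\R^3$ is the (nontrivial, irreducible) vector representation of $\SO(3)$, which contains no nonzero invariant vector; the same therefore holds for the $L$-fold sum $\su(2)^{\oplus L}$. Hence the only invariant covector is zero, forcing $d\cP_{GF}|_p=0$, which is exactly the statement that the flat connection is a stationary point. I do not anticipate a real obstacle here, since the argument is coordinate-free and purely group-theoretic; the one point to handle with care is that the relevant tangent/cotangent identification is the one induced by the $\mathrm{Ad}$-invariant Killing form, so that ``invariant covector'' really does coincide with ``invariant vector'' in the vector representation.

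A tempting shortcut would be to invoke the inversion $G_i\mapsto G_i^{-1}$, whose differential at the identity is $-\id$, so that an inversion-symmetric $\cP_{GF}$ would immediately have vanishing gradient; however, while each single $G$ is conjugate to $G^{-1}$ in $\SU(2)$, a common conjugating $h$ inverting all loops simultaneously need not exist for $L>1$, so this route does not follow from conjugation invariance alone and I would avoid relying on it.
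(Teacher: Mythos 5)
Your proof is correct and follows exactly the route the paper intends: the paper derives the conjugation invariance $\cP_{GF}(G_1,\dots,G_L)=\cP_{GF}(hG_1h^{-1},\dots,hG_Lh^{-1})$ and then simply asserts that the proposition ``directly'' follows, and your linearization at the fixed point together with the observation that the adjoint representation of $\SU(2)$ on $\su(2)^{\oplus L}$ has no nonzero invariant (co)vector is the natural rigorous completion of that one-line argument. Your closing caveat about the inversion shortcut is also well taken and does not affect the validity of the main argument.
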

\begin{prop} \label{coro:Norm_tree-v.s.-loop}
If the bulk graph $\bGamma$ is a tree, i.e. does not contain any loop, then the bulk probability function $\cP(\{g_{e}\})$ is constant and does not depend on the bulk holonomies $g_{e}$.
\end{prop}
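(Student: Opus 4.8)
The plan is to deduce this directly from the gauge-fixing construction developed just above the statement, specialized to the case where the bulk graph carries no independent loop. The essential input is the \emph{full} gauge invariance of the bulk probability $\cP(\{g_{e}\})=\la\psi(\{g_{e}\}_{e\in\bGamma})|\psi(\{g_{e}\}_{e\in\bGamma})\ra$ under $\SU(2)^{V}$, acting at every vertex --- bulk \emph{and} boundary. Indeed, although $\psi$ is only covariant under boundary gauge transformations (it is mapped by the unitary $\bigotimes_{e}h_{v(e)}^{\eps_{e}^{v}}$ on the boundary Hilbert space), this unitary cancels against its adjoint when one forms the boundary inner product, so that $\cP$ is genuinely invariant with no distinction between bulk and boundary vertices.

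Granting this, the first step is to pick any root vertex $v_{0}$ and take the bulk graph $\bGamma$ itself as the maximal tree $T$: this is legitimate precisely because $\bGamma$ is assumed to be a tree, so it contains no cycle and spans all $V$ vertices using exactly $V-1$ edges. The second step is to apply the gauge transformation $h_{v}=\big(\overleftarrow{\prod}_{\ell\in P[v_{0}\to v]}g_{\ell}\big)^{-1}$, which by the same computation as in the general argument sends every tree edge to the identity, $h_{t(e)}g_{e}h_{s(e)}^{-1}=\id$ for all $e\in\bGamma$. Since every bulk edge now belongs to the tree, there are no residual non-tree edges, i.e. $L=E-V+1=0$, and hence no loop holonomies $G_{e}$ survive. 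The final step is to invoke gauge invariance to write $\cP(\{g_{e}\})=\cP(\{\id\}_{e\in\bGamma})$, a fixed number independent of the $g_{e}$; this is exactly the $L=0$ instance of the reduction $\cP(\{g_{e}\})=\cP_{GF}(G_{1},\dots,G_{L})$.

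The only point requiring care --- and the place where one should be attentive rather than where any genuine difficulty lies --- is the legitimacy of gauge-fixing along paths that traverse boundary vertices. The transformations $h_{v}$ are generically nontrivial at boundary vertices, so the argument goes through only because $\cP$ is invariant under boundary gauge transformations as well. Once that property is established (as above), the statement is immediate: for a tree there is nothing left to gauge-fix after the spanning tree is exhausted, so the probability collapses to a constant. It is worth noting that the same reduction also yields the preceding Proposition, the flat configuration $G_{1}=\dots=G_{L}=\id$ being a stationary point because it is a fixed point of the conjugation action under which $\cP_{GF}$ is invariant; the present statement is the degenerate case where that conjugation-invariant function has no arguments at all.
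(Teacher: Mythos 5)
Your proof is correct and follows essentially the same route as the paper: the statement is presented there as a direct corollary of the preceding gauge-fixing construction, in which a tree bulk graph is its own maximal spanning tree, leaving $L=E-V+1=0$ residual loop holonomies, so that full gauge invariance of $\cP$ (including at boundary vertices, where the covariance unitaries cancel in the inner product) reduces it to $\cP(\{\id\})$. You merely make explicit the steps the paper leaves implicit, including the point about gauge transformations at boundary vertices, which the paper also notes.
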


\subsection{Spin network maps as quantum circuits}

We would like to build on the interpretation of spin network wave-functions as valued in the space linear forms on the boundary Hilbert space, or boundary maps. This can be translated operationally as spin networks defining quantum circuits on the boundary data.

Let us fix the spins on the boundary edges and distinguish their orientation. Then a spin network wave-function for the bulk graph defines a family of maps, between the spins on the incoming boundary edges to the spins on the outgoing boundary edges, labeled by the holonomies living on the bulk links:
\be
\psi(\{g_{e}\}_{e\in\bGamma})
\,:\,\,
\bigotimes_{\substack{e\in\pp\Gamma\\ t(e)\in\Gamma}} \cV_{j_{e}}
\longmapsto
\bigotimes_{\substack{e\in\pp\Gamma\\ s(e)\in\Gamma}} \cV_{j_{e}}
\,.
\ee
Of course, we could unfix the boundary spins and more generally attach the larger Hilbert space $\cV=\bigoplus_{j}\cV_{j}$ to each boundary edge.
As illustrated on fig.\ref{fig:spinnetcircuit}, the spin network graph, with its link and node structure, already carries the natural structure of a circuit. The holonomies, or $\SU(2)$ group elements, on the graph links are interpreted as unitary one-spin gates, while the intertwiners, or $\SU(2)$-invariant maps, naturally define multi-spins gates.
\begin{figure}[htb]
	\centering
\begin{tikzpicture} []
\draw[thick,decorate,decoration={brace},xshift=-4pt,yshift=0pt]
(0,0) -- (0,1) -- (0,2) node [black,yshift=-1cm,xshift=-1.35cm] {Incoming edges};
    \draw[thick] (0,1)--node[midway,above]{\footnotesize Spin $| j_2, m_2 \ra$}(2,1) node[midway,sloped]{$>$};
    \draw[thick] (0,2)--node[midway,above]{\footnotesize Spin $| j_1, m_2 \ra$}(2,2) node[midway,sloped]{$>$};
    \draw[thick] (0,0)--node[left,above]{\footnotesize Spin $| j_3, m_3 \ra$} node[midway,sloped]{$>$} (2,0);
    \draw[thick] (3,0) -- (4.5,0) node[midway,sloped]{$>$};
    \draw[thick] (2,-0.5) rectangle (3,2.5) node [pos=.5]{$\cI_A$};
    \draw[thick] (4.5,-0.5) rectangle (5.5,1.5) node [pos=.5]{$\cI_B$};
    \draw[thick] (7,0.5) rectangle (8,2.5) node [pos=.5]{$\cI_C$};
    \draw[thick] (4.5,1)--(4,1);
    \draw (3.75,1) circle (0.25) node {$g_2$};
    \draw[thick] (3.5,1)--(3,1)node[midway,sloped]{$<$};
    \draw[thick] (6.5,1)--(7,1) node[midway,sloped]{$>$};
    \draw (6.25,1) circle (0.25) node {$g_3$};
    \draw[thick] (5.5,1)--(6,1);
    \draw[thick] (3,0)--(4.5,0);
    \draw[thick] (5.5,0)--(8.5,0)node[midway,sloped]{$>$};
    \draw[thick] (3,2)--(4.75,2);
    \draw (5,2) circle (0.25) node {$g_1$};
    \draw[thick] (5.25,2)--(7,2)node[midway,sloped]{$>$};
    \draw[thick] (8,2)--(8.5,2)node[midway,sloped]{$>$};
    \draw[thick,decorate,decoration={brace,mirror},xshift=4pt,yshift=0pt]
    (8.5,0) -- (8.5,2) node [black,midway,xshift=1.35cm] {Outgoing edges};
\end{tikzpicture}
	\caption{Spin network as a quantum circuit: holonomies become unitary one-spin gates while intertwiners are multi-spin gates; the circuit can contains loops.}
	\label{fig:spinnetcircuit}
\end{figure}
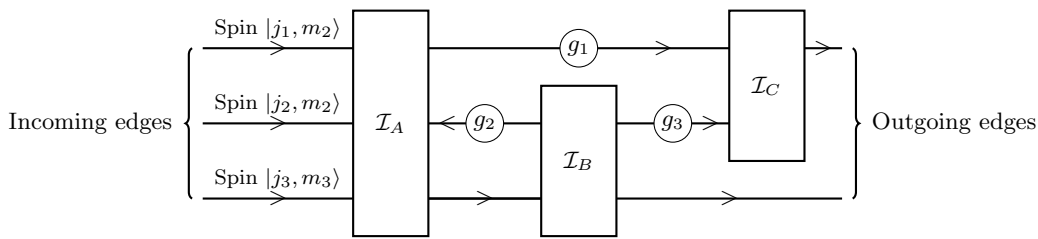

The spin network state is not a process in itself. There are two important points to keep in mind. First, a spin network is a spatial construct, and not directly a space-time structure. A spin network is not a (quantum) causal history (see e.g. \cite{Markopoulou:1999cz,Hawkins:2003vc} for a presentation and discussion on quantum causal histories). The maps that it defines between the boundary spins  are thus possible processes that might occur if the spin network state itself (i.e. the quantum state of 3D geometry) does not evolve. In that sense, it is truly a circuit, to which we haven't yet sent an input and on which we can still adjust some parameters. Indeed, the second important remark is that the holonomies are not fixed. The spin network defines a whole family of boundary maps, which vary in the individual one-spin gates defined by the holonomies $\{g_{e}\}_{e\in\bGamma}$ along the bulk edges. From the point of view of the boundary, these holonomies are not  fixed, they should either be averaged over or some other criteria should be found to determine them. For instance, the holonomies, or more precisely their quantum probability distribution, should ultimately be determined by the dynamics of quantum gravity.
Nevertheless, even without exploring the issue of defining the dynamics of loop quantum gravity, either by a Hamiltonian constraint operator or by spinfoam transition amplitudes, this quantum circuit perspective allows to formulate interesting questions:
\begin{itemize}

\item Working with a given spin network state, with a fixed graph, fixed spins and intertwiners, can we characterize the resulting subset of boundary maps induced by allowing for arbitrary holonomies along the edges? Or vice-versa, how much does a boundary state (for both incoming and outgoing boundary spins) fixes the holonomies in the bulk? Could this be used to formulate a holographic principle for loop quantum gravity?

\item Going further, looking at the spin network state as a black box, with access solely to the boundary spins, if we know the subset of boundary maps that it defines, how much of the bulk graph and intertwiners can we re-construct? Could one think of the diffeomorphism constraints of loop quantum gravity as identifying spin network states which lead to same set of boundary maps? This would be an holographic implementation of the dynamics through bulk-to-boundary coarse-graining, along the lines of \cite{Livine:2017xww}.

\item The issue of defining the dynamics or the coarse-graining of the theory is actually equivalent to the problem of defining a physical inner product or a flow of inner products from the microscopic theory to a coarse-grained macroscopic theory. The quantum circuit perspective offers a possible approach. The microscopic inner product between quantum circuit is defined as the loop quantum gravity kinematical inner product, reflecting the scalar product between intertwiners, i.e. the basic multi-spin gates. As we coarse-grain or sparsify the quantum circuit (while possibly not affecting the boundary maps), we reduce the bulk structure of the circuit by encompassing subsets of holonomies and intertwiners into single larger multi-spin gates, thus leading to a scalar product between those multi-spin gates. The ultimate stage is the fully coarse-grain state, directly provided with the inner product between boundary maps. Studying this in more details would reveal the coarse-graining flow of spin network states in loop quantum gravity.

\end{itemize}

Although these topics are very likely essential to the understanding of the renormalization flow, holographic behavior and semi-classical regime of loop quantum gravity,  they are broad questions out of the scope of the present work and are postponed to future investigation.

\section{Boundary Density Matrix}


\subsection{Bulk state to Boundary density matrix}

We would like to shift the focus from the bulk to the boundary and investigate in more details the boundary state induced by the bulk spin network state defined as the density matrix obtained by integrating over the group elements, or in other words, taking the partial trace over bulk holonomies:
\be \label{eq:Coarse-graining}
\rho_{\pp\Gamma}[\psi]=
\int
\prod_{e\in\bGamma}\rd g_{e}\,
|{\psi}(\{g_{e}\}_{e\in\bGamma})\ra\la \psi(\{g_{e}\}_{e\in\bGamma})|
\,\in
\textrm{End}(\cH^{\pp}_{\Gamma})
\,,
\ee
\be
\tr\,
\rho_{\pp\Gamma}[\psi]
=
\int[\rd g_{e}]\,
\la \psi(\{g_{e}\}_{e\in\bGamma})|{\psi}(\{g_{e}\}_{e\in\bGamma})\ra
=
\int[\rd g_{e}]\,
\cP(\{g_{e}\}_{e\in\bGamma})
\,.\nn
\ee
This mixed state on the boundary can be considered as a coarse-graining of the bulk spin network state \cite{Livine:2006xk,Bianchi:2013toa}.
The goal of this paper is to  compare the data encoded in the bulk wave-function $\psi_{\Gamma}$ and in the induced boundary density matrix $\rho_{\pp\Gamma}$.
\begin{figure}[htb!]
	\centering
	\begin{tikzpicture} []

\coordinate(O1) at (1,3);
\coordinate(O2) at (2.4,3.3);
\coordinate(O3) at (2.7,3);
\coordinate(O4) at (2.3,2.7);
\coordinate(O5) at (2,3.2);
\draw (O1) -- ++(-1,0.4);
\draw (O1) -- ++(-1,-0.4);
\draw (O2) -- ++(1,0.2);
\draw (O3) -- ++(1,0);
\draw (O4) -- ++(0.5,-0.3);

\draw (O1) to[bend left] (O5);
\draw (O1) to[bend right] (O5);
\draw (O2) to (O5);
\draw (O2) to (O3);
\draw (O3) to (O4);
\draw (O4) to (O5);

\node[scale=0.75] at (O1)  {$\bullet$};
\node[scale=0.75] at (O2)  {$\bullet$};
\node[scale=0.75] at (O3)  {$\bullet$};
\node[scale=0.75] at (O4)  {$\bullet$};
\node[scale=0.75] at (O5)  {$\bullet$};

\coordinate(P1) at (1,1);
\coordinate(P2) at (2.4,1.3);
\coordinate(P3) at (2.7,1);
\coordinate(P4) at (2.3,0.7);
\coordinate(P5) at (2,1.2);
\draw (P1) -- ++(-1,0.4);
\draw (P1) -- ++(-1,-0.4);
\draw (P2) -- ++(1,0.2);
\draw (P3) -- ++(1,0);
\draw (P4) -- ++(0.5,-0.3);

\draw (P1) to[bend left] (P5);
\draw (P1) to[bend right] (P5);
\draw (P2) to (P5);
\draw (P2) to (P3);
\draw (P3) to (P4);
\draw (P4) to (P5);

\node[scale=0.75] at (P1)  {$\bullet$};
\node[scale=0.75] at (P2)  {$\bullet$};
\node[scale=0.75] at (P3)  {$\bullet$};
\node[scale=0.75] at (P4)  {$\bullet$};
\node[scale=0.75] at (P5)  {$\bullet$};

\draw[->,>=stealth,very thick] (4,2) -- node [midway, above] {$\displaystyle{ \int \prod_{ e\in \bGamma } \rd g_e }$} (6,2);

\coordinate(A1) at (8,3);
\coordinate(A2) at (8.1,3);
\coordinate(A3) at (9.5,3);
\coordinate(A4) at (9.4,3);
\draw (A1) -- ++(-0.7,0.4);
\draw (A1) -- ++(-0.7,-0.4);
\draw (A3) -- ++(0.7,0);
\draw (A3) -- ++(0.7,0.3);
\draw (A3) -- ++(0.5,-0.4);

\draw (A1) to (A2);
\draw (A3) to (A4);
\draw[dashed] (A2) to[bend left] (A4);

\coordinate(B1) at (8,1);
\coordinate(B2) at (8.1,1);
\coordinate(B3) at (9.5,1);
\coordinate(B4) at (9.4,1);
\draw (B1) -- ++(-0.7,0.4);
\draw (B1) -- ++(-0.7,-0.4);
\draw (B3) -- ++(0.7,0);
\draw (B3) -- ++(0.7,0.3);
\draw (B3) -- ++(0.5,-0.4);

\draw (B1) to (B2);
\draw (B3) to (B4);
\draw[dashed] (B2) to[bend right] (B4);

\draw[color=red] (A2) to[bend left] (B2);
\draw[color=red] (A4) to[bend right] (B4);

\node[scale=0.75] at (A2)  {$\bullet$};
\node[scale=0.75] at (A4)  {$\bullet$};
\node[scale=0.75] at (B2)  {$\bullet$};
\node[scale=0.75] at (B4)  {$\bullet$};


	\end{tikzpicture}
	\caption{Boundary density matrix for spin network basis states. The two copies of the spin networks are the bra $\la \psi |$ and ket $| \psi \ra$ which are glued together by the Haar integration over the bulk holonomies  $\int \prod \rd g_{ e\in \bGamma }$. }
	\label{fig:densitymatrix}
\end{figure}
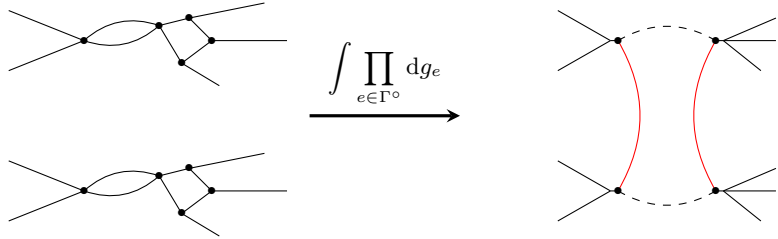

\smallskip

Let us start by looking at normalized pure spin network basis states, i.e. with fixed spins $j_{e}$ and fixed normalized intertwiners $I_{v}$. They are factorized states in the sense that the intertwiners are decoupled so that there is no intertwiner entanglement as discussed in \cite{Livine:2017fgq}. As a result, the boundary state only depends on the intertwiners living on the boundary vertices (i.e. the vertices with at least one boundary edge) and not  on the bulk intertwiners.
Let us insist that ``boundary vertices'' are still in the bulk, the adjective ``boundary'' refers to the fact that they are connected to boundary edges.
Indeed, the orthonormality of the Wigner matrices implies that that each bulk edge is cut in half and both half-edges are glued with their counterparts on the second copy of the wave-function, as illustrated on fig.\ref{fig:densitymatrix}. We get the norm of every bulk intertwiner, normalized to 1, times the contribution from boundary intertwiners which gives the boundary density matrix:
\beq
\la \{\tk_{e},\tm_e\}\,|
\rho_{\pp\Gamma}[\Psi_{\{j_{e},I_{v}\}}]
| \{k_{e},m_e\}\ra
=&\prod_{e}\delta_{k_{e},j_{e}}\delta_{\tk_{e},j_{e}}
\prod_{v\in\pp\Gamma}&
\la
\bigotimes_{\substack{e\in\pp\Gamma\\ v\in e}} j_{e}\tm_{e}
\otimes
\bigotimes_{\substack{e\in\bGamma\\ v=s(e)}} j_{e}m_{e}^{s}
|\,I_{v}\,|
\bigotimes_{\substack{e\in\bGamma\\ v=t(e)}} j_{e}m_{e}^{t}
\ra \nn\\
&&\overline{
\la
\bigotimes_{\substack{e\in\pp\Gamma\\ v\in e}} j_{e}m_{e}
\otimes
\bigotimes_{\substack{e\in\bGamma\\ v=s(e)}} j_{e}m_{e}^{s}
|\,I_{v}\,|
\bigotimes_{\substack{e\in\bGamma\\ v=t(e)}} j_{e}m_{e}^{t}
\ra
}
\,.
\eeq
Assuming that each boundary edge is attached to a different vertex, i.e. each boundary vertex connects to a single boundary edge, this tremendously simplifies. Indeed, as illustrated on fig. \ref{fig:boundaryvertex}, the self-gluing of an intertwiner on itself leads to the identity matrix on the open edge. As a consequence, the density matrix is the totally mixed state with fixed spin on each boundary edge:
\be
\rho_{\pp\Gamma}[\Psi_{\{j_{e},I_{v}\}}]
=
\bigotimes_{e\in\pp\Gamma} \f{\id_{j_{e}}}{(2j_{e}+1)}
\,.
\ee 
This boundary density matrix, for a spin network basis state, clearly does not allow to see the bulk structure!

In the slightly more general case of boundary vertices connected to several boundary edges, the boundary density matrix reflects the first layer of the bulk and ``sees'' the total recoupled spin of the boundary edges for each boundary edge. We will analyze this case in more details in the later section \ref{sec:manyedges}.

\begin{figure}[htb]
\vspace*{5mm}
\begin{subfigure}{0.4\linewidth}
	\begin{tikzpicture}
\coordinate (O1) at (-0.8,0);
\coordinate (O2) at (0.45,0);
\coordinate (O3) at (1.2,0.75);
\coordinate (O4) at (1.2,0);
\coordinate (O5) at (1.2,-0.75);
\draw[thick] (O1) -- node[midway] {$>$} node[midway,above=2.3] {$e\in\pp\Gamma$} node[midway,below=2.3] {$j_e,m_e$} (O2);
\draw (O2) node[scale=0.7] {$\bullet$};
\draw[thick,in=+180,out=+90,scale=3,rotate=0] (O2) to (O3);
\draw[thick] (O2) -- (O4);
\draw[thick,in=+180,out=-90,scale=3,rotate=0] (O2) to (O5);

\coordinate (O6) at (2.25,0.75);
\coordinate (O7) at (2.25,0);
\coordinate (O8) at (2.25,-0.75);
\coordinate (O9) at (3,0);
\coordinate (O10) at (4.25,0);
\draw[thick] (O10) -- node[midway] {$<$} 
node[midway,below=2.3] {$j_e,\tm_e$} (O9);
\draw (O2) node[scale=0.7] {$\bullet$};
\draw (O9) node[scale=0.7] {$\bullet$};
\draw[thick,in=+90,out=+0,scale=3,rotate=0] (O6) to (O9);
\draw[thick] (O7) -- (O9);
\draw[thick,in=-90,out=+0,scale=3,rotate=0] (O8) to (O9);

\draw[dashed] (O3) -- (O6);
\draw[dashed] (O4) -- (O7);
\draw[dashed] (O5) -- (O8);

\node at (-1,-1.5) {$=$};

\coordinate (O11) at (-0.5,-1.5);

\draw[thick] (O11) -- node[midway] {$>$} node[midway,below=2.3] {$j_e,m_e$}  ++ (1.25,0) node[scale=0.7] {$\bullet$} -- node[midway] {$<$} node[midway,below=2.3] {$j_e,\tm_e$}  ++ (1.25,0) ;

	\end{tikzpicture}
\end{subfigure}
\hspace{1cm}
\begin{subfigure}[h]{0.4\linewidth}
\begin{tikzpicture}
\coordinate (A1) at (0,0);
\coordinate (A2) at (0.75,0.5);
\coordinate (A3) at (0.75,0);
\coordinate (A4) at (0.75,-0.5);

\draw[thick,in=+180,out=+60,scale=3,rotate=0] (A1) to (A2);
\draw[thick] (A1) -- (A3);
\draw[thick,in=+180,out=-60,scale=3,rotate=0] (A1) to (A4);

\draw[thick] (A1) -- ++ (135:1);
\draw[thick] (A1) -- ++ (165:1);
\draw[thick] (A1) -- ++ (195:1);
\draw[thick] (A1) -- ++ (225:1);

\coordinate (B1) at (2.55,0);
\coordinate (B2) at (1.8,0.5);
\coordinate (B3) at (1.8,0);
\coordinate (B4) at (1.8,-0.5);

\draw[thick] (B1) -- ++ (45:1);
\draw[thick] (B1) -- ++ (15:1);
\draw[thick] (B1) -- ++ (-15:1);
\draw[thick] (B1) -- ++ (-45:1);

\draw[thick,in=0,out=120,scale=3,rotate=0] (B1) to (B2);
\draw[thick] (B1) -- (B3);
\draw[thick,in=0,out=240,scale=3,rotate=0] (B1) to (B4);

\draw[dashed] (A2) -- (B2);
\draw[dashed] (A3) -- (B3);
\draw[dashed] (A4) -- (B4);

\draw (A1) node[scale=0.7] {$\bullet$} node[above=5] {$v$};
\draw (B1) node[scale=0.7] {$\bullet$} node[above=5] {$v$};

\node at (-0.8,-2) {$ \propto \; \displaystyle{ \sum_{J} \, C_{I_0}[J] }$};

\coordinate (C) at (-0.5,-2);
\coordinate (D) at (1.5,-2);
\coordinate (E) at (2.5,-2);

\draw[thick] (D) -- node[midway,above] {$J$} (E) ;

\draw[thick] (D) -- ++ (135:1);
\draw[thick] (D) -- ++ (165:1);
\draw[thick] (D) -- ++ (195:1);
\draw[thick] (D) -- ++ (225:1);

\draw[thick] (E) -- ++ (45:1);
\draw[thick] (E) -- ++ (15:1);
\draw[thick] (E) -- ++ (-15:1);
\draw[thick] (E) -- ++ (-45:1);

\draw (D) node[scale=0.7] {$\bullet$};
\draw (E) node[scale=0.7] {$\bullet$};

\end{tikzpicture}
\end{subfigure}

	\caption{
	Boundary vertex contribution to the boundary density matrix from the self-gluing of intertwiners:  single boundary edge vs many boundary edges.}
	\label{fig:boundaryvertex}
\end{figure}
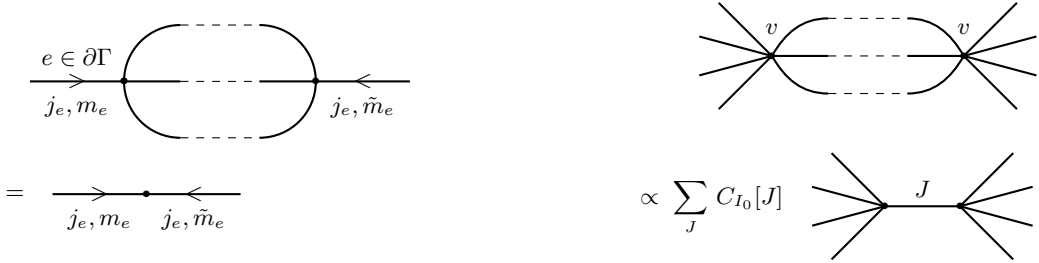

Spin network basis states are actually very peculiar and are a very special case for the bulk quantum geometry. They are eigenstates for geometrical observables, such as areas and volumes, but there are not coherent states with minimal spread on both connection and triad (i.e. on parallel transport and metric) and they do not commute with the Hamiltonian constraints. More generally, physically relevant states will be superposition of such spin network basis states, thus superposition of spins and intertwiners, leading to correlation and entanglement between bulk vertices, in which case  the boundary density matrix will become non-trivial.
Before analyzing in more details the structure of the boundary density matrix, let us underline the main two features of the boundary state as compared to the bulk state:
\begin{itemize}

\item The boundary state $\rho_{\pp\Gamma}$ is typically mixed  even if the bulk spin network state is pure \cite{Livine:2006xk,Bianchi:2013toa}.
Thus a coarse-graining procedure trading the bulk states for the boundary state irremediably creates entropy. In particular, endowing the bulk states with a unitary dynamics would naturally lead to a decoherence process (and possibly re-coherence) for the  boundary states \cite{Feller:2016zuk,Feller:2017ejs}.

\item The boundary state $\rho_{\pp\Gamma}$ does not decompose onto  intertwiners between the boundary spins, even though the bulk spin network is made out of individual intertwiners, as pointed out in \cite{Livine:2006xk,Livine:2013gna,Livine:2017xww}.
Indeed, the density matrix is invariant under the action by conjugation of the $\SU(2)$  group,
\be
\forall h\in\SU(2)\,,\quad
\la \{\tk_{e},\tm_e\}\,|
\, h^{-1}\rho_{\pp\Gamma}[\psi]
h\,| \,\{k_{e},m_e\}\ra
=
\la \{\tk_{e},\tm_e\}\,|
\rho_{\pp\Gamma}[\psi]
| \{k_{e},m_e\}\ra
\ee
where the $\SU(2)$ transformation acts simultaneously on all the boundary edges.  It is however not invariant under gauge transformations acting on left or right, $\rho_{\pp\Gamma}\mapsto h^{-1}\rho_{\pp\Gamma}$ or $\rho_{\pp\Gamma}\mapsto \rho_{\pp\Gamma}h$.
This means that the total spin of the boundary state does not vanish. In fact the boundary defines an intertwiner between the two copies of the wave-function, the bra $\la \psi(\{g_{e}\}_{e\in\bGamma})|$ and the ket $|{\psi}(\{g_{e}\}_{e\in\bGamma})\ra$, as illustrated on fig.\ref{fig:densitymatrix}. The recoupled spin $J$ between the boundary edges defines the overall channel between the bra and the ket. This total spin of the boundary state is called the {\it closure defect} (since the $\SU(2)$ gauge invariance is enforced by the closure constraint, which is a discretization of the Gauss law of the first order formulation of general relativity) \cite{Livine:2013gna,Livine:2019cvi}. The $J=0$ component is the component with vanishing total boundary spin - in usual jargon, the intertwiner component. It represents the ``closed'' or flat component, while the components with $J\ne 0$ can be interpreted as bulk curvature.
From the viewpoint of coarse-graining, it reflects that curvature builds up when gluing flat blocks -the intertwiners- together \cite{Livine:2013gna}.  The gauge symmetry breaking at the boundary, due to allowing $J\ne 0$, can be also be understood as responsible for the entropy of isolated horizon (and thus black holes) in the loop quantum gravity framework \cite{Donnelly:2008vx,Donnelly:2011hn,Livine:2017fgq} (see \cite{Donnelly:2014gva,Donnelly:2016auv} for a more general discussion of gauge symmetry and symmetry breaking on the boundary of gauge field theories).
At the end of the day, the closure defect, or total spin, provides a very useful basis to study the structure of boundary states and of induced boundary density matrices.

\end{itemize}

\subsection{The closure defect basis and $\SU(2)$-invariance of the boundary density matrix}

We would like to introduce the closure defect basis for boundary states, which amounts to decompose them according to the total boundary spin.
Assuming that the boundary edges are all incoming (or all outgoing) to simplify the orientation conventions, we recouple all the boundary spins $j_{e}$ into their total spin $J$:
\be
\cH^{\pp}_{\Gamma}
=
\bigoplus_{ \{j_e\}_{e\in\pp\Gamma} }\bigotimes_{e} \cV_{j_{e}}
=
\bigoplus_{ \{j_e\}_{e\in\pp\Gamma} }\bigoplus_{J} \cV_J \otimes \cN_J^{\{j_{e}\}}\,,
\ee
where the multiplicity spaces (or degeneracy spaces) $\cN_J^{\{j_{e}\}}$ consist in the spaces of intertwiners (i.e. $\SU(2)$-invariant states) in the tensor product of the total spin Hilbert space $\cV^{J}$ with the individual spins $\bigotimes_{e} \cV_{j_{e}}$,
\be
\cN_J^{\{j_{e}\}}
:=
\textrm{Inv}_{\SU(2)}\left[
\cV_J\otimes\bigotimes_{e\in\pp\Gamma} \cV_{j_{e}}
\right]
\,.
\ee
Here, due to the bulk spin network structure, the total spin $J$ is necessary an integer.
Instead of the decoupled basis $|\{j_{e},m_{e}\}_{e}\ra$, we use the recoupled basis, as illustrated on fig.\ref{fig:recoupledbasis}:
\be
\cH^{\pp}_{\Gamma}
=
\bigoplus_{ \{j_e\}_{e\in\pp\Gamma} }\bigoplus_{J,M}\bigoplus_{I^{(J,\{j_{e}\})}}
\C|J,M\ra\otimes|(J,\{j_{e}\}), I\ra
=
\bigoplus_{J,M}\bigoplus_{ \{j_e\} }\bigoplus_{I^{(J,\{j_{e}\})}}
\C|J,M\ra\otimes|(J,\{j_{e}\}), I\ra
\,,
\ee
where the $I^{(J,\{j_{e}\})}=|(J,\{j_{e}\}), I\ra$'s are a basis of intertwiners  in the multiplicity space $\cN_J^{\{j_{e}\}}$ . We might write $I^{(J)}$ instead of  $I^{(J,\{j_{e}\})}$ whenever we don't need to explicitly specify the value of the boundary spins.
These intertwiner states not only encode the recoupled total spin  $J$, but also how the individual spins $j_{e}$ are weaved together.
\begin{figure}[hbt!]
\centering
\vspace*{5mm}
\begin{tikzpicture}[scale=0.7]

\coordinate (O) at (-4.2,0);

\draw[thick] (O) -- ++ (0.5,0);
\draw[thick] (O) ++ (0,1) -- ++ (0.5,0);
\draw[thick] (O) ++ (0,0.5) -- ++ (0.5,0);
\draw[thick,loosely dotted] (O) ++ (0.25,-0.15) -- ++ (0,-0.9);
\draw[thick] (O) ++ (0,-1) -- ++ (0.5,0);

\coordinate (O1) at (-1,0);

\draw (O1)++(3,0) node {$\sim$};

\draw (O1) node { $ | \{ j_e, m_e \}_{e\in\pp\Gamma} \ra \in\cH_{\pp\Gamma} $};

\coordinate (A1) at (5,0);
\coordinate (A2) at (4,1);
\coordinate (A3) at (4,0.5);
\coordinate (A4) at (4,0);
\coordinate (A5) at (4,-1);

\draw[thick] (A1) -- (A2) -- ++ (-0.5,0);
\draw[thick] (A1) -- (A3) -- ++ (-0.5,0);
\draw[thick] (A1) -- (A4) -- ++ (-0.5,0);
\draw[thick] (A1) -- (A5) -- ++ (-0.5,0); 

\draw[thick,loosely dotted] (3.8,-0.15) -- (3.8,-0.9);

\draw (A1) node[scale=0.7] {$\bullet$} node[above=2] {$I$};

\draw[thick] (A1) -- ++ (1.2,0) node[below] {$J,M$} ++ (3.5,-0.5) node {$\underbrace{ | J,M \ra }_{ 
\overset{ \in }
{   \phantom{ \big( } \cV_J \phantom{\big)} }
 } 
 { \phantom{ \big( } \otimes  }
 \underbrace{ | (J,\{j_e\}),I \ra }_{ 
 \overset{ \in }
 {
 \phantom{ \big( } \textrm{Inv} \left( \cV_J \otimes \bigotimes_{e}\cV_{j_e} \right) \phantom{ \big) }
 }
    } $};

\end{tikzpicture}
%
%
\caption{
Recoupled basis for  boundary states in terms of the total boundary spin (or closure defect) $J$.}
\label{fig:recoupledbasis}
\end{figure}
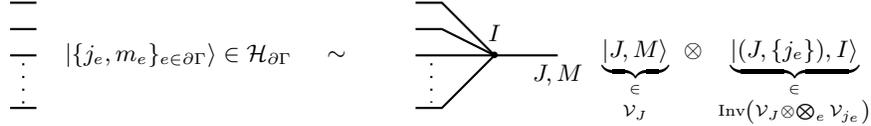

In the framework of the coarse-graining of spin networks introduced in \cite{Charles:2016xwc}, the total spin $J$ is the tag and the multiplicity states $I\in\cN_J^{\{j_{e}\}}$ are tagged intertwiners.
From a physical standpoint, the multiplicity spaces $\cN_J^{\{j_{e}\}}$ for spin recoupling give the black hole horizon micro-states in a na\"ive leading order approach to black hole (micro-canonical) entropy and holography in loop quantum gravity, e.g. \cite{Ashtekar:1997yu,Domagala:2004jt,Livine:2005mw,Agullo:2009eq,Livine:2012cv,Asin:2014gta}.

Let us focus on the case with fixed boundary spins $\{j_{e}\}$, although this is a mere alleviation of the notations, since the spins $j_{e}$ can be implicitly absorbed in the definition of the recoupling intertwiner $I^{(J,\{j_{e}\})}$. The bulk wave-function evaluated on bulk holonomies is a boundary state and can thus be decomposed onto the recoupled basis:
\be \label{eq:Bulk-BoundaryGeneticForm}
| \psi(\{g_{e}\}_{e\in\bGamma}) \ra
=
\sum_{J}\sum_{M}\sum_{I^{(J)}} C_{JMI}(\{g_{e}\}_{e\in\bGamma}) |J,M\ra \otimes |J,I^{(J)}\ra
\,,
\ee
where the coefficients $C_{JMI}(\{g_{e}\}_{e\in\bGamma})$ reflect the internal bulk structure of the wave-functions and depend on the bulk spins and intertwiners.
$\SU(2)$ gauge transformation act non-trivially on the wave-function by the group action on the boundary spins. Now, as we have seen earlier, the density matrix $\rho_{\pp}=\int \rd g_{e}\,|\psi(g_{e})\ra\la \psi(g_{e})|$ is invariant under  conjugation by the simultaneous $\SU(2)$ action on all the boundary spins $\bigotimes_{e} D^{j_{e}}(h)$. This is a direct consequence of the bulk $\SU(2)$ gauge invariance,
\beq
\rho_{\pp\Gamma}[\psi]
&=&
\int \prod_{e} \rd g_{e} \, | \psi(\{ g_{e} \}_{e\in\bGamma}) \ra\la (\{g_{e}\}_{e\in\bGamma} ) |
=
\int \prod_{e} \rd g_{e} \, | \psi(\{ h \, g_{e} \, h^{-1} \}_{e\in\bGamma}) \ra\la (\{h \, g_{e} \, h^{-1}\}_{e\in\bGamma}) |
\\
&=&
\int \prod_{e} \rd g_{e} \, h | \psi(\{ g_{e} \}_{e\in\bGamma}) \ra\la (\{g_{e}\}_{e\in\bGamma}) | h^{-1}
=h \, \rho_{\pp\Gamma}[\psi] \, h^{-1}
\,, \qquad
\forall\, h \in \SU(2)
\,.
\eeq
This $\SU(2)$ action on the boundary boils down to the $\SU(2)$ action on the recoupled spin $D^{J}(h)$ and does not touch the multiplicity sector,
%
%
\be \label{eq:BoundarySU(2)action}
h \triangleright | \psi(\{g_{e}\}_{e\in\bGamma}) \ra
=
\sum_{J}\sum_{M,N}\sum_{I^{(J)}} C_{JMI}(\{g_{e}\}_{e\in\bGamma}) \, D^J_{\tilde{M}M}(h) \, |J,\tilde{M}\ra \otimes |J,I\ra
\,.
\ee
%
%
This means that the invariance of the boundary density matrix, $h\,\rho_{\pp}\,h^{\dagger}=\rho_{\pp}$ for all group elements $h\in\SU(2)$ implies it is necessarily  totally mixed on each subspace at fixed total spin $J$ and that all the information is encoded in the multiplicity subspaces. This is expressed more precisely by the following lemma:

\begin{lemma}
A normalized $\SU(2)$-invariant density matrix $\rho$, thus satisfying $h\,\rho\,h^{\dagger}=\rho\,, \forall\, h\in \SU(2)$, has the following form:
\be \label{eq:SU(2)-invariant}
\rho
=
\bigoplus_{J} p(J) \f{ \id_{\cV_J} }{2J+1} \otimes \rho_{\cN_{J}}
\,, \qquad
\tr \rho_{\cN_{J}}=1\,,\forall J\in\N
\,,\qquad
\tr \rho=\sum_{J}p(J)=1
\,.
\ee
The coefficients $p(J)$ define the probability distribution over the total spin $J$. The operator $\id_{\cV_J}=\sum_{M}| J,M\ra\la J,M|$ is the identity on $\cV_J$ and $\rho_{\cN_{J}}$ is an arbitrary density matrix in the multiplicity space $\cN_{J}$.
\end{lemma}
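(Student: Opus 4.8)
The plan is to recognize this statement as a direct application of Schur's lemma to the isotypic decomposition of the boundary Hilbert space. The invariance condition $h\,\rho\,h^{\dagger}=\rho$ for all $h\in\SU(2)$ says precisely that $\rho$ lies in the commutant of the representation $U(h)=\bigoplus_{J}D^{J}(h)\otimes\id_{\cN_{J}}$ acting on $\cH^{\pp}_{\Gamma}=\bigoplus_{J}\cV_{J}\otimes\cN_{J}$, where, as established in \eqref{eq:BoundarySU(2)action}, the group acts only on the spin factor $\cV_{J}$ and trivially on the multiplicity factor $\cN_{J}$. So the heart of the argument is to compute this commutant.

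First I would write $\rho$ in block form with respect to the decomposition, with blocks $\rho_{JJ'}\colon\cV_{J'}\otimes\cN_{J'}\to\cV_{J}\otimes\cN_{J}$. The invariance then reads $(D^{J}(h)\otimes\id)\,\rho_{JJ'}=\rho_{JJ'}\,(D^{J'}(h)\otimes\id)$ for every $h$; that is, each block intertwines the representations $D^{J'}\otimes\id$ and $D^{J}\otimes\id$. I would then invoke Schur's lemma in its two guises. Since $\cV_{J}$ and $\cV_{J'}$ are inequivalent irreducibles for $J\ne J'$, the off-diagonal blocks must vanish, $\rho_{JJ'}=0$. For the diagonal blocks, expanding $\rho_{JJ}$ over a basis of the multiplicity space reduces the condition to a family of intertwiners $\cV_{J}\to\cV_{J}$, each of which must be a scalar multiple of $\id_{\cV_{J}}$; collecting these scalars into a matrix on $\cN_{J}$ yields $\rho_{JJ}=\id_{\cV_{J}}\otimes B_{J}$ for some operator $B_{J}\in\textrm{End}(\cN_{J})$. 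Hence $\rho=\bigoplus_{J}\id_{\cV_{J}}\otimes B_{J}$.

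It then remains to impose Hermiticity, positivity and normalization. Since $\id_{\cV_{J}}$ is positive, $\rho\ge 0$ and $\rho=\rho^{\dagger}$ force each $B_{J}$ to be a positive Hermitian operator on $\cN_{J}$. Writing $p(J):=(2J+1)\,\tr_{\cN_{J}}B_{J}=\tr(\id_{\cV_{J}}\otimes B_{J})$ and, when $p(J)\ne 0$, $\rho_{\cN_{J}}:=B_{J}/\tr_{\cN_{J}}B_{J}$, one obtains $B_{J}=\frac{p(J)}{2J+1}\rho_{\cN_{J}}$ with $\tr\rho_{\cN_{J}}=1$; the overall trace condition $\tr\rho=\sum_{J}(2J+1)\tr_{\cN_{J}}B_{J}=\sum_{J}p(J)=1$ then identifies $p(J)$ as a probability distribution over the total spin. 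This reproduces exactly the claimed form.

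The step I expect to require the most care is not any single computation but the bookkeeping of multiplicities in the application of Schur's lemma: one must check that the intertwining constraint couples only the spin factors $\cV_{J}$ and leaves $\textrm{End}(\cN_{J})$ entirely free, so that the commutant is $\bigoplus_{J}\id_{\cV_{J}}\otimes\textrm{End}(\cN_{J})$ rather than something more restrictive. Once this commutant is correctly identified, the positivity and normalization constraints are routine.
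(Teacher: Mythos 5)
Your proof is correct and follows exactly the route the paper has in mind: the paper asserts this lemma without a detailed proof, relying on precisely the Schur's-lemma computation of the commutant of $\bigoplus_J D^J(h)\otimes\id_{\cN_J}$ that you spell out, followed by the routine positivity and normalization bookkeeping. Nothing is missing; you have simply made explicit the standard argument the paper leaves implicit.
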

The $\SU(2)$ invariance  is a key property of the boundary density matrix, which descends directly from the gauge invariance of the bulk wave-functions under local $\SU(2)$ transformations.
Let us stress the important point that this is a statistical invariance under the $\SU(2)$ action, at the level of the density matrix. This does not amount to the invariance of  pure quantum states on the boundary. Indeed strict $\SU(2)$ invariance of the wave-function (i.e. $h\,\rho=\rho\,h^{\dagger}=\rho$) would require $J=0$, while we can have here an arbitrary distribution over all (allowed) values of the total spin $J$.

\subsection{Universal bulk reconstruction from the boundary density matrix}

The natural question is how much can we know about the bulk structure from the boundary density matrix. For instance, does the combinatorial structure of the bulk graph deeply affect the type of boundary density matrix one gets? Here, we  show a universal reconstruction  procedure. As hinted by the work in \cite{Livine:2017xww}, a single bulk loop is enough to get arbitrary boundary density matrices. More precisely, any arbitrary $\SU(2)$-invariant density matrix on the boundary Hilbert space can be induced from a pure bulk state on the single loop bulk graph. We prove this powerful result  below. This can be understood as a boundary-to-bulk purification theorem.

\begin{prop} \label{prop:BoundaryDensityMatrix}
A mixed state $\rho$  on the boundary Hilbert space $\cH_{\pp}$ is  $\SU(2)$-invariant, $h\,\rho\,h^{\dagger}=\rho$, if and only if it is an induced boundary density matrix (IBDM) from a pure (gauge-invariant) bulk state $| \psi(\{g_{e}\}_{e\in\bGamma}) \ra$ for some bulk graph $\bGamma$ connecting the boundary edges.
\end{prop}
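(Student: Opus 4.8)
The forward implication is essentially already in hand: the conjugation computation displayed just before the statement shows that any induced boundary density matrix obeys $h\,\rho_{\pp\Gamma}\,h^{\dagger}=\rho_{\pp\Gamma}$, as a consequence of the bulk $\SU(2)$ gauge invariance together with the bi-invariance of the Haar measure. So the plan is to establish the converse by explicit reconstruction on the single-vertex, single-loop graph $\bGamma$: all boundary edges are attached to the unique vertex $v_{0}$, and one bulk edge forms a loop at $v_{0}$ carrying a holonomy $g\in\SU(2)$. First I would invoke the preceding Lemma to put $\rho$ in its canonical form $\rho=\bigoplus_{J}p(J)\,\f{\id_{\cV_J}}{2J+1}\ot\rho_{\cN_{J}}$, fixing the boundary spins $\{j_{e}\}$ for definiteness (the sum over $\{j_e\}$ re-enters only as a direct sum at the very end).

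On this graph a gauge transformation acts by conjugation $g\mapsto hgh^{-1}$ at $v_{0}$ together with $\bigotimes_{e}D^{j_{e}}(h)$ on the boundary, so writing $|\psi(g)\ra=\sum_{J,M,I}C_{JMI}(g)\,|J,M\ra\ot|J,I\ra$ in the recoupled basis, gauge invariance becomes the covariance constraint $C_{JMI}(hgh^{-1})=\sum_{N}D^{J}_{MN}(h)\,C_{JNI}(g)$. The functions solving this come from the Clebsch--Gordan decomposition $\cV_{k}\ot\cV_{k}^{*}=\bigoplus_{J\le 2k}\cV_{J}$: for each loop spin $k$ with $2k\ge J$ one obtains a covariant building block $\Phi^{J,k}_{M}(g)=\sum_{a,b}\la JM|k a;k b\ra\,D^{k}_{ab}(g)$, and the most general gauge-invariant wave-function is
\be
C_{JMI}(g)=\sum_{k\ge J/2}\beta^{(J)}_{kI}\,\Phi^{J,k}_{M}(g)\,,
\ee
with free coefficients $\beta^{(J)}_{kI}$ encoding the vertex intertwiner data.

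Then I would compute the induced boundary density matrix by the Haar integral. Peter--Weyl orthogonality of the Wigner matrices, $\int\rd g\,D^{k}_{ab}(g)\overline{D^{k'}_{a'b'}(g)}=\f{\delta_{kk'}\delta_{aa'}\delta_{bb'}}{2k+1}$, collapses the loop holonomies, and Clebsch--Gordan orthogonality in the loop indices yields $\int\rd g\,C_{JMI}(g)\overline{C_{J'M'I'}(g)}=\delta_{JJ'}\delta_{MM'}\,G^{(J)}_{II'}$ with $G^{(J)}_{II'}=\sum_{k\ge J/2}\f{1}{2k+1}\beta^{(J)}_{kI}\overline{\beta^{(J)}_{kI'}}$. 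Comparing with the canonical form, reconstruction succeeds precisely when one can arrange $G^{(J)}=\f{p(J)}{2J+1}\rho_{\cN_{J}}$. Since $G^{(J)}$ is manifestly a Gram matrix over the loop-spin label $k$, it can be made to equal \emph{any} prescribed positive semidefinite matrix: diagonalize the target, assign one distinct value $k\ge J/2$ to each eigenvector, and read off the $\beta^{(J)}_{kI}$. Normalization is automatic since $\tr\rho=\sum_{J}p(J)=1$.

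The step I expect to be the crux is recognizing that the multiplicity sector cannot be reconstructed with a loop of \emph{fixed} spin. A single Wigner matrix $D^{k}(g)$ absorbs the maximally mixed factor $\id_{\cV_J}/(2J+1)$ through the reduced density matrix of the singlet, but it forces $\rho_{\cN_{J}}$ to be pure. The resolution --- and the reason a single geometric loop nevertheless suffices --- is that the loop holonomy may carry a coherent superposition of spins $k$, so that the multiplicity of $\cV_{J}$ inside $\bigoplus_{k}\cV_{k}\ot\cV_{k}^{*}$ supplies exactly the purifying ancilla needed to generate an arbitrary mixed $\rho_{\cN_{J}}$. Checking that the label $k$ plays this ancilla role cleanly, i.e. that distinct $k$ contribute orthogonally under the Haar integral, is the technical heart of the argument; the remainder is bookkeeping with recoupling coefficients.
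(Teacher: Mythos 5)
Your proposal is correct and follows essentially the same route as the paper: the same single-vertex, single-loop graph, the same covariant building blocks (your Clebsch--Gordan coefficients are the paper's $3j$-symbols up to normalization), and the same key mechanism of assigning one distinct loop spin $k$ per eigenvector of $\rho_{\cN_J}$ so that Peter--Weyl orthogonality between different $k$ reproduces the spectral decomposition of each multiplicity matrix. Your Gram-matrix phrasing and the explicit identification of the loop-spin label as the purifying ancilla are a slightly more structured presentation of exactly the construction the paper writes down directly.
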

\begin{proof}
We already know that induced boundary density matrices are $\SU(2)$-invariant. We have to show the reverse statement. Let us consider an arbitrary $\SU(2)$-invariant density matrix, 
\be
\rho
=
\bigoplus_{J} p(J) \f{ \id_{\cV_J} }{2J+1} \otimes \rho_{\cN_{J}}
\,,\nn
\ee
and let us diagonalize the density matrices for each multiplicity subsector,
\be
\label{eq:SU(2)-inv.DM}
\rho_{\cN_J}=\sum_{r=1}^{R_{J}} W_{I_{r}}^{(J)} \,| J, I_{r}^{(J)} \ra\la J, I_{r}^{(J)} |
\,,
\ee
where $R_{J}$ is the rank of $\rho_{\cN_J}$ and the intertwiners $I_{r}^{(J)}$ are orthonormal states in the multiplicity space $\cN_J$.


Let us consider the bulk graph, as \cite{Livine:2017xww}, with a single vertex tying all the boundary edges to a single loop as drawn on fig \ref{fig:LoopySpinNetwork}.
Then a spin network state is a superposition of intertwiners between the boundary spins and the (pair of) spin(s) carried by the loop. We can unfold this intertwiner with a (virtual) link between the boundary edge and the loop. This (virtual) intermediate link carries the total boundary spin $J$. For each value of $J$, we need to specify the spin $k$ carried by the loop and the two intertwiners at the nodes. The three-valent intertwiner recoupling the loop spin $k$ to the total spin $J$ is unique (when it exists), while the intertwiner recoupling the boundary spins $\{j_{e}\}$ into $J$ will naturally be the intertwiners $I_{r}^{(J)}$.
\begin{figure}[hbt!]
\centering
\begin{tikzpicture}[scale=0.7]

\coordinate (O) at (0,0);
\coordinate (A) at (-6,0);

\draw (A) node {$\rho_{\pp}$};

\draw [domain=0:360] plot ({-6+1.75 * cos(\x)}, {1.75 * sin(\x)});
\draw[thick] (A) ++(-0.6,0);
\draw[thick] (A) ++(0:1) --++ (0:1.5);
\draw[thick] (A) ++(60:1) --++ (60:1.5) ++(60:0.35) node {$j_1$};
\draw[thick] (A) ++(120:1) --++ (120:1.5) ++(120:0.35) node {$j_2$};
\draw[thick] (A) ++(180:1) --++ (180:1.5) ++(180:0.35) node {$j_3$};
\draw[thick] (A) ++(240:1) --++ (240:1.5);
\draw[thick] (A) ++(300:1) --++ (300:1.5);

\draw [thick, loosely dotted,domain=195:230] plot ({-6+2.6 * cos(\x)}, {2.6 * sin(\x)});


\coordinate (O) at (3.5,0);

\draw[->,>=stealth,very thick] (-2,0) -- node[above] {?} (0,0);

\draw[thick,red] (O) -- ++ (315:1.5) node[very near end,above=2] {$J$} coordinate (B) node[blue,scale=0.7] {$\bullet$};
\draw[blue,thick,in=-90,out=0,scale=4.5,rotate=0] (B)  to[loop] node[near start,sloped] {$>$} node[near end,left=2] {$k$} (B) ++(315:0.35) node {$g$};

\draw[thick] (O) -- ++ (0:1.5);
\draw[thick] (O) -- ++ (45:1.5) ++(45:0.35) node {$j_1$};
\draw[thick] (O) -- ++ (90:1.5) ++(90:0.35) node {$j_2$};
\draw[thick] (O) -- ++ (135:1.5) ++(135:0.35) node {$j_3$};
\draw[thick] (O) -- ++ (180:1.5); 
\draw[thick] (O) -- ++ (225:1.5); 

\draw (O) node[scale=0.7] {$\bullet$};

\draw [thick, loosely dotted,domain=160:200] plot ({3.5+1.8 * cos(\x)}, {1.8 * sin(\x)});

\end{tikzpicture}
%
%
\caption{
The universal reconstruction procedure purifying a  $\SU(2)$-invariant boundary density matrix into a pure spin network superposition for a  bulk made of a single vertex and single loop.
}
\label{fig:LoopySpinNetwork}
\end{figure}
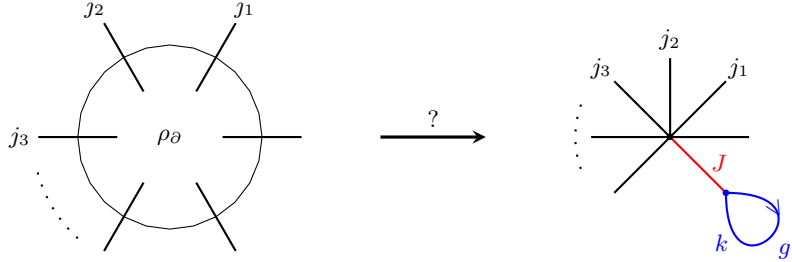

Indeed, for each value of the total spin $J$, we choose $R_{J}$ distinct spins $k_{r}^{(J)}$ for the loop with $J\leq 2k_{r(J)}$, so that $\cV_{J}\subset \cV_{k}\otimes \cV_{k}$, i.e. the loop spin can recouple to $J$, i.e. there exists a 3-valent intertwiner (given by the corresponding Clebsh-Gordan coefficients). We then define the following pure spin network for the $1$-loop graph, in terms of a single bulk holonomy $g$ on the loop,
\be
| \psi(g) \ra
=
\sum_{J,M}  \sqrt{ p(J) } | J,M \ra \otimes \sum_{r}^{R_{J}} \sum_{m,n}^{k_{r}^{(J)}} (-1)^{k_{r}^{(J)}+m} \, \sqrt{ 2k_{r}^{(J)}+1 } \, D^{k_{r}^{(J)}}_{nm}(g) \, \begin{pmatrix}
J & k_{r}^{(J)}& k_{r}^{(J)} \\
M & -m & n
\end{pmatrix} \, \sqrt{ W_{I_{r}}^{(J)} } \, | J, I_{r}^{(J)} \ra
\,.
\ee
It is straightforward to check this boundary pure state leads back to the wanted $\SU(2)$-invariant density matrix \eqref{eq:SU(2)-inv.DM} upon integration over the bulk holonomy $g$.

\end{proof}
It is quite remarkable that the superposition of loop spins and bulk intertwiners naturally leads to  mixed boundary density matrices. 

\subsection{Probing the first layer of the bulk: Bouquets of boundary edges}
\label{sec:manyedges}

Up to now, we have defined the boundary density matrix induced by a bulk spin network state, underlined the fact that the resulting boundary density matrix is typically mixed for a pure spin network state and showed how to construct such a pure bulk state on a graph with at least one loop given a (suitably gauge invariant) boundary density matrix. This universal reconstruction procedure, given above in the  proof of Proposition \ref{prop:BoundaryDensityMatrix}, with a bulk graph made of a single vertex and a single bulk loop, should be understood as a purification of the boundary density matrix into a bulk state. There are nevertheless many possible bulk states on possibly complicated graphs inducing the same boundary state, leading to many ways to purify a given mixed boundary state. In light of this fact, we wish to understand better how the bulk graph structure and potential correlations between the spins and intertwiners within the bulk possibly get reflected in the boundary density matrix.


In this section, we would like to start diving into the bulk, or at least start probing the first layer of the bulk beyond the boundary edges. More precisely, we would like to see the ``boundary vertices'', i.e. the vertices to which are attached boundary edges, and understand if a finer study of the boundary density matrix allows to extract the information about whether bunches of boundary edges are attached to the same boundary vertex.
Indeed, although a rather natural assumption is that each boundary edge to connected to a different vertex in the bulk, this is not a generic configuration. A more general configuration involves boundary edges regrouped into bouquets, each attached to a vertex, as illustrated on fig.\ref{fig:bouquet}.
\begin{figure}[hbt!]
\centering
\begin{tikzpicture}[scale=0.7]
\coordinate (O) at (0,0);
\draw (O) ++(0.6,0);
\coordinate (O1) at (60:1);
\coordinate (O2) at (120:1);
\coordinate (O3) at (240:1);
\coordinate (O4) at (300:1);

   \draw [domain=0:360] plot ({cos(\x)}, {sin(\x)});

\draw (O1) -- ++ (60:0.7) node [right,midway] {$J_1$};
\draw (O2) -- ++ (120:0.7) node [left,midway] {$J_2$};
\draw (O3) -- ++ (240:0.7) node [left,midway] {$J_3$};
\draw (O4) -- ++ (300:0.7) node [right,midway] {$J_4$};

\draw (O1) ++ (60:0.7) -- ++ (0:0.7);
\draw (O1) ++ (60:0.7) -- ++ (45:0.7);
\draw (O1) ++ (60:0.7) node[scale=0.7,blue] {$\bullet$} -- ++ (90:0.7);

\draw (O2) ++ (120:0.7) -- ++ (90:0.7);
\draw (O2) ++ (120:0.7) -- ++ (135:0.7);
\draw (O2) ++ (120:0.7) node[scale=0.7,blue] {$\bullet$} -- ++ (180:0.7);

\draw (O3) ++ (240:0.7) -- ++ (180:0.7);
\draw (O3) ++ (240:0.7) -- ++ (225:0.7);
\draw (O3) ++ (240:0.7) node[scale=0.7,blue] {$\bullet$} -- ++ (270:0.7);

\draw (O4) ++ (300:0.7) -- ++ (270:0.7);
\draw (O4) ++ (300:0.7) -- ++ (315:0.7);
\draw (O4) ++ (300:0.7) node[scale=0.7,blue] {$\bullet$} -- ++ (0:0.7);

\draw[->,>=stealth,very thick] (-4,0) -- (-2,0);
\coordinate (A) at (-6,0);
   \draw [domain=0:360] plot ({-6+cos(\x)}, {sin(\x)});
\draw (A) ++(-0.6,0);
\path (A) ++(60:1) coordinate (A1);
\path (A) ++(120:1) coordinate (A2);
\path (A) ++(240:1) coordinate (A3);
\path (A) ++(300:1) coordinate (A4);

\draw (A1) node[scale=0.7] {$\bullet$};
\draw (A2) node[scale=0.7] {$\bullet$};
\draw (A3) node[scale=0.7] {$\bullet$};
\draw (A4) node[scale=0.7] {$\bullet$};

\draw (A1) -- ++ (0:0.7);
\draw (A1) -- ++ (45:0.7);
\draw (A1) -- ++ (90:0.7);

\draw (A2) -- ++ (90:0.7);
\draw (A2) -- ++ (135:0.7);
\draw (A2) -- ++ (180:0.7);

\draw (A3) -- ++ (180:0.7);
\draw (A3) -- ++ (225:0.7);
\draw (A3) -- ++ (270:0.7);

\draw (A4) -- ++ (270:0.7);
\draw (A4) -- ++ (315:0.7);
\draw (A4) -- ++ (0:0.7);

\draw (O1) node[scale=0.7,blue] {$\bullet$};
\draw (O2) node[scale=0.7,blue] {$\bullet$};
\draw (O3) node[scale=0.7,blue] {$\bullet$};
\draw (O4) node[scale=0.7,blue] {$\bullet$};

\end{tikzpicture}
\caption{
Bouquets of boundary edges attached to boundary vertices $v\in V^{\pp}$ and the chicken feet basis labeled by the recoupled spin $J_{v}$ for each bouquet.}
\label{fig:bouquet}
\end{figure}
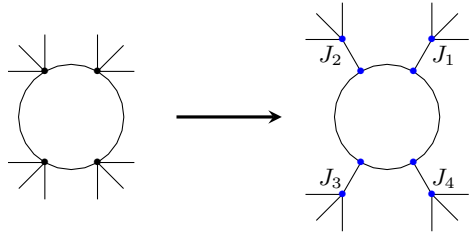

This leads us to introduce a ``chicken feet'' basis where we recouple the spin of the boundary edges of each bouquet separately instead of only considering the total recoupled spin $J$. We thus introduce the bouquet spin $J_{v}$ for each boundary vertex $v$. Writing $V^{\pp}$ for the set of boundary vertices, the boundary Hilbert space decomposes as:
\be
\cH_{\pp}=\bigoplus_{\{j_{e}\}_{e\in\pp}}\bigotimes_{e\in\pp}\cV_{j_{e}}
=\bigoplus_{\{J_{v}\}_{v\in V^{\pp}}} \bigotimes_{v\in V^{\pp}}\cV_{J_{v}}\otimes \cN_{\{J_{v}\}}
\,,
\ee
\be
\cN_{\{J_{v}\}}
=
\bigotimes_{v\in V^{\pp}}\bigoplus_{\{j_{e}\}_{e\,| v\in\pp e}}\textrm{Inv}\Big{[}
\cV_{J_{v}}\otimes \bigotimes_{e\,| v\in\pp e} \cV_{j_{e}}
\Big{]}
\,,
\ee
leading to the chicken feet basis states $|\{J_{v}\}_{v\in V^{\pp}},\{j_{e}\}_{e\in\pp},\{\cI^{J_{v}}_{\{j_{e}\}}\}_{v\in V^{\pp}}\ra$ labelled by the boundary edge spins $j_{e}$, the boundary bouquet spins $J_{v}$ and the intertwiners recoupling them,
as depicted on fig.\ref{fig:bouquet}.

As for the bulk, we similarly unfold the intertwiner states living on the boundary vertices and decompose them into two intertwiners, one ``boundary'' component which recouples all the boundary spins into $J_{v}$ and one ``bulk'' component which recouples the spins on the remaining bulk edges attached to the vertex to $J_{v}$, as illustrated on fig.\ref{fig:boundaryintertwiner}.
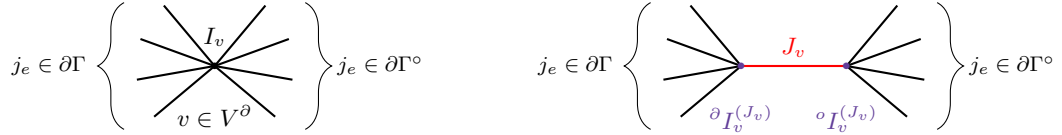
\begin{figure}[hbt!]
\centering
\begin{tikzpicture}[scale=0.7]

\coordinate (O) at (0,0);

\coordinate (A1) at (145:2.1);
\coordinate (A2) at (215:2.1);

\coordinate (B1) at (35:2.1);
\coordinate (B2) at (-35:2.1);

\draw (O) node[scale=0.7] {$\bullet$} ++ (0,0.55) node {$I_v$} ++ (0,-1.5) node{$v\in V^{\pp}$ };

\draw[thick] (O) -- ++ (130:1.5) ;
\draw[thick] (O) -- ++ (160:1.5) ;
\draw[thick] (O) -- ++ (190:1.5) ;
\draw[thick] (O) -- ++ (220:1.5) ;

\draw[thick] (O) -- ++ (50:1.5) ;
\draw[thick] (O) -- ++ (20:1.5) ;
\draw[thick] (O) -- ++ (-10:1.5) ;
\draw[thick] (O) -- ++ (-40:1.5) ;

\draw [decorate,decoration={brace,amplitude=10pt},xshift=-4pt,yshift=0pt]
(A2) -- (A1) node [black,midway,xshift=-1cm] {\footnotesize $j_{e} \in \pp\Gamma$};

\draw [decorate,decoration={brace,amplitude=10pt,mirror},xshift=-4pt,yshift=0pt]
(B2) -- (B1) node [black,midway,xshift=1cm] {\footnotesize $j_{e} \in \pp\bGamma$};

\coordinate (O1) at (10,0);
\coordinate (O2) at (12,0);

\path (O1) ++(145:2.1) coordinate (C1);
\path (O1) ++(215:2.1) coordinate (C2);
\path (O2) ++(35:2.1) coordinate (D1);
\path (O2) ++(-35:2.1) coordinate (D2);

\draw[thick,red] (O1) -- node[above] {$J_v$} (O2);

\draw [decorate,decoration={brace,amplitude=10pt},xshift=-4pt,yshift=0pt]
(C2) -- (C1) node [black,midway,xshift=-1cm] {\footnotesize $j_{e} \in \pp\Gamma$};

\draw [decorate,decoration={brace,amplitude=10pt,mirror},xshift=-4pt,yshift=0pt]
(D2) -- (D1) node [black,midway,xshift=1cm] {\footnotesize $j_{e} \in \pp\bGamma$};

\draw[thick] (O1) -- ++ (130:1.5) ;
\draw[thick] (O1) -- ++ (160:1.5) ;
\draw[thick] (O1) -- ++ (190:1.5) ;
\draw[thick] (O1) node[RoyalPurple,scale=0.7] {$\bullet$} -- ++ (220:1.5) ;

\draw[thick] (O2) -- ++ (50:1.5) ;
\draw[thick] (O2) -- ++ (20:1.5) ;
\draw[thick] (O2) -- ++ (-10:1.5) ;
\draw[thick] (O2) node[RoyalPurple,scale=0.7] {$\bullet$} -- ++ (-40:1.5) ;

\draw[RoyalPurple] (O1) ++ (0,-1) node{$\bI_{v}^{(J_v)} $ };
\draw[RoyalPurple] (O2) ++ (0,-1) node{$\BI_{v}^{(J_v)} $ };

\end{tikzpicture}
\caption{
Unfolding intertwiners on boundary vertices: the decomposition into boundary and bulk intertwiner components.}
\label{fig:boundaryintertwiner}
\end{figure}
Decomposed intertwiner basis states are then labeled by the boundary and bulk spins attached to the (boundary) vertex, the bouquet spin $J_{v}$ and the two boundary and bulk intertwiner, $|\{j_{e}\}_{e\in\pp},J_{v},\bI_{v}^{(J_{v})},\BI_{v}^{(J_{v})}\ra$.

The reconstruction of the first layer of the bulk from the boundary density matrix simply reflects the fact that the boundary component of the intertwiners $I_{v}$ at a vertex attached to some boundary edges matches the boundary intertwiner recoupling the boundary spins to their bouquet spins, i.e. $\cI^{J_{v}}_{\{j_{e}\}}=\bI_{v}^{(J_{v})}$ for a boundary vertex $v\in V^{\pp}$ and for all values of the bouquet spin $J_{v}$.
Let us see more precisely how this gets encoded   into the boundary density matrix.

\medskip

To alleviate the notations, let us fix the  spins $j_{e}$ on the boundary edges $e\in\pp\Gamma$, although it is straightforward to allow arbitrary superpositions of the boundary spins.
In light of the $\SU(2)$ gauge transformations at the vertices and the resulting $\SU(2)$ gauge invariance of the boundary density matrix at each boundary vertex, a boundary density matrix necessarily reads:
\be
\rho_{\pp}
=
\sum_{\{J_{v}\}_{v\in V^{\pp}}}
\bigotimes_{v\in V^{\pp}}\f{\id_{J_{v}}}{(2J_{v}+1)} \otimes \rho_{\{J_{v}\}}
\,,
\ee
where, for each value of the bouquet spins $\{J_{v}\}$, we have the totally mixed state on the spin states and a possibly non-trivial density matrix  $\rho_{\{J_{v}\}}$ on the corresponding multiplicity space,
\be
 \rho_{\{J_{v}\}}\in\textrm{End}[\cN_{\{J_{v}\}}]
 \,,\qquad
 \cN_{\{J_{v}\}}
=
\bigotimes_{v\in V^{\pp}}\textrm{Inv}\Big{[}
\cV_{J_{v}}\otimes \bigotimes_{e\,| v\in\pp e} \cV_{j_{e}}
\Big{]}
\,,
\ee
since we hare working at fixed boundary spins $j_{e\in\pp}$.

For spin network basis state, a straightforward calculation leads to the multiplicity matrices $\rho_{\{J_{v}\}}$ simply given by the boundary components of the intertwiners living at the boundary vertices:
\begin{lemma}
For a spin network basis states $\Psi_{\{j_{e},I_{v}\}}$ with given spins $j_{e}$ on all bulk and boundary edges, as well as chosen intertwiner states $I_{v}$ at each vertex, we decompose the intertwiner states living on boundary vertices in the bouquet spin basis separating their ``boundary'' component from their ``bulk'' component,
\be
\forall v\in V^{\pp}\,,\quad
I_{v}=\sum_{J_{v}}
C_{v}(J_{v})\,
|J_{v},\bI_{v}^{(J_{v})},\BI_{v}^{(J_{v})}\ra\,,
\ee
with normalized  intertwiners $\bI_{v}^{(J_{v})}$ and $\BI_{v}^{(J_{v})}$, respectively between the boundary spins and the bouquet spin, then between the bouquet spin and the bulk spins attached to the vertex $v$.
Then the induced boundary density matrix reads:
\be
\rho_{\pp}[\Psi_{\{j_{e},I_{v}\}}]
=
\sum_{\{J_{v}\}_{v\in V^{\pp}}}
\bigotimes_{v\in V^{\pp}}\f{\id_{J_{v}}}{(2J_{v}+1)} \otimes
\rho_{\{J_{v}\}}\,,
\qquad\textrm{where} \quad
|\bI_{v}^{(J_{v})}\ra\in
\mathrm{Inv}\Big{[}
\cV_{J_{v}}\otimes \bigotimes_{e\,| v\in\pp e} \cV_{j_{e}}
\Big{]}
\,.
\ee
The multiplicity matrices $\rho_{\{J_{v}\}}$ have rank-one:
\be
\rho_{\{J_{v}\}}
=
|\iota_{\{J_{v}\}}\ra\la \iota_{\{J_{v}\}}|\,,\quad
\iota_{\{J_{v}\}}
=
\bigotimes_{v\in V^{\pp}}
C_{v}(J_{v})
|\bI_{v}^{(J_{v})}\ra
\,\in\cN_{\{J_{v}\}}
\,.
\ee
\end{lemma}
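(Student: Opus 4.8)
The plan is to reduce the statement to a single-vertex computation, exploiting the product structure of spin network basis states over the graph vertices. First I would invoke the matrix-element formula for $\rho_{\pp\Gamma}[\Psi_{\{j_e,I_v\}}]$ derived above: the Haar integration over bulk holonomies together with the Wigner orthogonality relation \eqref{eq:Peter-Weyl} severs each bulk edge and reglues each of its half-edges to the matching half-edge of the bra copy \emph{at the same endpoint vertex} (fig.~\ref{fig:densitymatrix}). Because the boundary edges attached to distinct boundary vertices live in distinct tensor factors of $\cH^{\pp}_{\Gamma}$, and because every purely bulk intertwiner $v\in V^{o}$ is fully contracted against its conjugate and contributes only $\langle I_v|I_v\rangle=1$, the induced density matrix factorizes as
\[
\rho_{\pp\Gamma}[\Psi_{\{j_e,I_v\}}]=\bigotimes_{v\in V^{\pp}}\rho_v\,,\qquad
\rho_v=\tr_{\text{bulk legs at }v}\,|I_v\rangle\langle I_v|\,,
\]
where $\rho_v$ acts on $\bigotimes_{e\,|\,v\in\pp e}\cV_{j_e}$. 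This is exactly the self-gluing of $I_v$ on its bulk legs depicted in fig.~\ref{fig:boundaryvertex}, and it is the conceptual crux, holding here because we work with a single basis state (correlations between vertices arising only for superpositions).

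Next I would feed the bouquet decomposition $I_v=\sum_{J_v}C_v(J_v)\,|J_v,\bI_v^{(J_v)},\BI_v^{(J_v)}\rangle$ (fig.~\ref{fig:boundaryintertwiner}) into $\rho_v$. Expanding each glued intertwiner over the magnetic index $M_v$ of the internal bouquet leg, $|J_v,\bI_v^{(J_v)},\BI_v^{(J_v)}\rangle\propto\sum_{M_v}|\bI_v^{(J_v)},M_v\rangle\otimes|\BI_v^{(J_v)},M_v\rangle$, the partial trace over the bulk legs reduces to the family of overlaps $\langle\BI_v^{(J_v')},M_v'|\BI_v^{(J_v)},M_v\rangle$. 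Here the key input is Schur's lemma: since $\BI_v^{(J_v)}$ is a normalized intertwiner carrying the single irrep $J_v$ on its internal leg, these overlaps are diagonal and proportional to the identity on $\cV_{J_v}$, contributing precisely the factor $1/(2J_v+1)$ and forcing $J_v'=J_v$, $M_v'=M_v$.

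Then I would re-express the surviving boundary factor through the multiplicity decomposition $\bigotimes_{e\,|\,v\in\pp e}\cV_{j_e}=\bigoplus_{J_v}\cV_{J_v}\otimes\cN^{(v)}_{J_v}$, identifying $|\bI_v^{(J_v)},M_v\rangle$ with $|J_v,M_v\rangle\otimes|\bI_v^{(J_v)}\rangle$, so that $\sum_{M_v}|\bI_v^{(J_v)},M_v\rangle\langle\bI_v^{(J_v)},M_v|=\id_{\cV_{J_v}}\otimes|\bI_v^{(J_v)}\rangle\langle\bI_v^{(J_v)}|$. This yields the single-vertex result $\rho_v=\sum_{J_v}|C_v(J_v)|^2\,\f{\id_{\cV_{J_v}}}{2J_v+1}\otimes|\bI_v^{(J_v)}\rangle\langle\bI_v^{(J_v)}|$. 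Finally, taking $\bigotimes_{v\in V^{\pp}}\rho_v$ and distributing the per-vertex sums over $J_v$ assembles the global sum $\sum_{\{J_v\}}$; the identity factors combine into $\bigotimes_v\id_{\cV_{J_v}}/(2J_v+1)$, while the rank-one factors combine into $|\iota_{\{J_v\}}\rangle\langle\iota_{\{J_v\}}|$ with $\iota_{\{J_v\}}=\bigotimes_v C_v(J_v)|\bI_v^{(J_v)}\rangle$, which is the claimed form with rank-one multiplicity matrices.

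I expect the main obstacle to lie in the bookkeeping of the second and third steps rather than in any deep structural fact: one must pin down the normalization conventions relating the glued states $|\bI_v^{(J_v)},M_v\rangle$ and $|\BI_v^{(J_v)},M_v\rangle$ to the unit intertwiners $\bI_v^{(J_v)}$, $\BI_v^{(J_v)}$ and to the multiplicity-space basis, so that the $1/(2J_v+1)$ produced by the Schur projection lands consistently on the $\cV_{J_v}$ factor and the coefficients $|C_v(J_v)|^2$ come out correctly. The factorization over vertices follows immediately from the gluing formula already established.
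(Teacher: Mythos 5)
Your proposal is correct and follows essentially the same route the paper intends: the Haar integration severs each bulk edge and self-glues every intertwiner onto its conjugate (the matrix-element formula and figs.~\ref{fig:densitymatrix}, \ref{fig:boundaryvertex}), the bulk vertices contribute $\la I_v|I_v\ra=1$, and at each boundary vertex the bouquet decomposition plus Schur's lemma on the bulk-leg contraction produces $\id_{\cV_{J_v}}/(2J_v+1)$ tensored with the rank-one projector on $\bI_v^{(J_v)}$. The normalization bookkeeping you flag is indeed the only delicate point, and your factorization over boundary vertices and the final assembly of $\iota_{\{J_v\}}$ match the paper's stated result.
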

This rank-one property obviously extends to possible spin network superposition states with correlation between bouquet spins, i.e. with coefficients $C(\{J_{v}\})$ generalizing the factorized ansatz $\prod_{v}C_{v}(J_{v})$ of basis states, but is ruined as soon as there is  non-trivial superpositions of the bulk  components of the boundary intertwiners or more generally non-trivial intertwiner correlations between the bulk vertices.
Indeed, let us consider a generic spin network state:
\be
\psi=\sum_{\{j_{e}\},\{I_{v}\}}
C^{\{j_{e}\}_{e\in\pp\Gamma},\{j_{e}\}_{e\in\Gamma^{o}}}_{\{J_{v}\}_{v\in V^{\pp}}}(\{\bI_{v}^{(J_{v})},\BI_{v}^{(J_{v})}\}_{v\in V^{\pp}},\{I_{w}\}_{w\notin V^{\pp}})
\bigotimes_{v\in V^{\pp}}\Big{(}\bI_{v}^{(J_{v})}\otimes\BI_{v}^{(J_{v})}\Big{)}
\,\otimes\,
\bigotimes_{w\notin V^{\pp}} I_{w}
\quad\in\cH_{\Gamma}\,,
\ee
where we use the notation $v$ for the boundary vertices and $w$ for the remaining vertices of the bulk graph. We have chosen an arbitrary orthonormal basis of intertwiners $I_{w}$ for bulk vertices while using explicitly the bouquet spin basis for the boundary vertices. A straightforward calculation yields the following induced boundary density matrix:
\be
\rho_{\pp}[\psi]=
\sum_{\{J_{v}\}_{v\in V^{\pp}}}
\bigotimes_{v\in V^{\pp}}\f{\id_{J_{v}}}{(2J_{v}+1)} \otimes
\rho_{\{J_{v}\}}\,,
\ee
\be
\rho_{\{J_{v}\}}
\,=\,
\sum_{\bI_{v},\widetilde{\bI_{v}}}\sum_{j_{e},\BI_{v},I_{w}}
C^{\{j_{e}\}}_{\{J_{v}\}}(\{\widetilde{\bI_{v}^{(J_{v})}},\BI_{v}^{(J_{v})}\},\{I_{w}\})
\,\overline{C^{\{j_{e}\}}_{\{J_{v}\}}(\{\bI_{v}^{(J_{v})},\BI_{v}^{(J_{v})}\},\{I_{w}\})}\,
\bigotimes_{v\in V^{\pp}}|\widetilde{\bI_{v}^{(J_{v})}}\ra\la \bI_{v}^{(J_{v})}|
\,.
\ee
The integration over the bulk holonomies amounts in the end to the partial trace over the bulk intertwiners (i.e. the intertwiner states at the vertices not connected to any boundary edge), over the bulk component of the intertwiners at the boundary vertices, and over the spins of the graph edges. This partial trace naturally leads to mixed states on the multiplicity spaces $\cN_{\{J_{v}\}}$ with higher rank multiplicity matrices $\rho_{\{J_{v}\}}$.
This means that non-trivial bulk correlations (between bulk intertwiners and bulk spins) get reflected in the rank of the multiplicity matrices $\rho_{\{J_{v}\}}$. This is a much finer witness of the bulk structure that the overall closure defect.

This hints towards a natural layer-by-layer reconstruction of the bulk from the boundary density matrix. Starting from $\rho_{\pp}$, one can try the various partitions of the boundary, grouping the boundary edges, and check which partition leads to a multiplicity matrix with the lowest rank, and thus with the least correlation between boundary vertices. Once this first layer of the bulk graph, one would thank follow the same logic to reconstruct the second layer of the bulk, grouping the bouquets together so that the second layer intertwiners are the least correlated possible. We would pursue this onion-like reconstruction until we reach the inner loop of the universal reconstruction procedure described in the previous section. It would be enlightening if one could translate this idea of a bulk with the least correlation between graph vertices into an action principle whose extrema would determine the bulk structure from the quantum boundary data fixed by the chosen boundary density matrix.

\section{Examples: Boundary Density Matrix for Candy Graphs}

We would like to conclude this paper  with explicit examples of the bulk-to-boundary procedure, from bulk spin networks to boundary density matrices. We will consider the case of a bulk graph with two boundary vertices. The deeper bulk structure does not matter and it is enough to consider a single loop to which the bulk edges connect. We consider the two examples of boundary vertices each with two boundary edges, and then each with three boundary edges, as drawn on fig.\ref{fig:candygraphs}.
\begin{figure}[hbt!]
\centering
\vspace*{3mm}
\begin{tikzpicture}[scale=0.7]

\draw [domain=0:360,dotted,thick] plot ({2.4 * cos(\x)}, {1.2 * sin(\x)});

\coordinate (A1) at (-2,0);
\coordinate (A2) at (2,0);

\draw (0,0) node {bulk};

\draw[thick] (A1) node[scale=0.7] {$\bullet$} -- ++ (145:1.5);
\draw[thick] (A1) -- ++ (215:1.5);

\draw[thick] (A2) node[scale=0.7] {$\bullet$} -- ++ (35:1.5);
\draw[thick] (A2) -- ++ (-35:1.5);

\draw [domain=0:360,dotted,thick] plot ({12+2.4 * cos(\x)}, {1.2 * sin(\x)});

\coordinate (B1) at (10,0);
\coordinate (B2) at (14,0);

\draw (12,0) node {bulk};

\draw[thick] (B1) node[scale=0.7] {$\bullet$} -- ++ (145:1.5);
\draw[thick] (B1) -- ++ (180:1.5);
\draw[thick] (B1) -- ++ (215:1.5);

\draw[thick] (B2) node[scale=0.7] {$\bullet$} -- ++ (35:1.5);
\draw[thick] (B2) -- ++ (0:1.5);
\draw[thick] (B2) -- ++ (-35:1.5);

\end{tikzpicture}
\caption{
Candy graphs.}
\label{fig:candygraphs}
\end{figure}
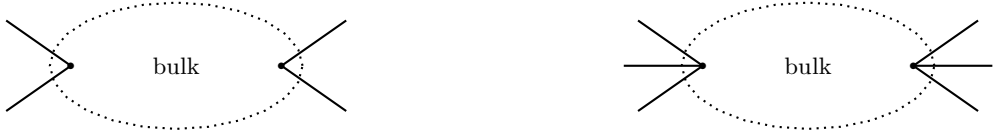

\subsection{The four-qubit candy graph}

Let us describe the graph with two vertices linked by a single loop and each with two boundary edges, as drawn on fig.\ref{fig:4candygraph}. We assume that the spin on the fours boundary edges are fixed to $j_{1}=j_{2}=j_{3}=j_{4}=\f12$ and we also fix the spin around the loop to an arbitrary value $k$. 
The bulk Hilbert space thus consists in the tensor product of the spaces of intertwiners living at the two vertices $\alpha$ and $\beta$:
\be
\cH_{bulk}= \cH_{\alpha}\otimes \cH_{\beta}\,,
\qquad
 \cH_{\alpha}= \cH_{\beta}
=
\textrm{Inv}[\cV_{{\f12}}\otimes \cV_{{\f12}}\otimes \cV_{k}\otimes \cV_{k}]
\,.
\ee
\begin{figure}[hbt!]
\centering
\begin{tikzpicture}[scale=0.7]

\coordinate (A1) at (0,0);
\coordinate (A2) at (3,0);

\draw[thick,in=115,out=65,rotate=0] (A1)  to node[above] {$k$} (A2)  node[scale=0.7] {$\bullet$} (A2) to[out=245,in=-65] node[below] {$k$} (A1) node[scale=0.7] {$\bullet$};

\draw[thick] (A1) -- ++ (145:1.5) ++ (145:0.35) node {$\f12$} ++ (15:0.5) node {$\circled{1}$};
\draw[thick] (A1) -- ++ (215:1.5) ++ (215:0.35) node {$\f12$} ++ (-15:0.5) node {$\circled{2}$};

\draw[thick] (A2) -- ++ (35:1.5) ++ (35:0.35) node {$\f12$} ++ (165:0.5) node {$\circled{3}$};
\draw[thick] (A2) -- ++ (-35:1.5) ++ (-35:0.35) node {$\f12$} ++ (-165:0.5) node {$\circled{4}$};

\draw[->,>=stealth,very thick] (6,0) -- (7.5,0);

\coordinate (B1) at (11.5,0);
\coordinate (B2) at (13.5,0);

\draw[thick,in=105,out=75,rotate=0] (B1)  to (B2)  node[scale=0.7] {$\bullet$} (B2) to[out=255,in=-75] (B1) node[scale=0.7] {$\bullet$};

\draw[thick] (B1) ++ (-1.5,0) node[scale=0.7] {$\bullet$} -- ++ (145:1.5) ++ (145:0.35);
\draw[thick] (B1) ++ (-1.5,0) -- ++ (215:1.5) ++ (215:0.35);

\draw[thick,red] (B1)  node[black,scale=0.7] {$\bullet$} -- node[red,above] {$J_{\alpha}$} ++ (-1.5,0)  node[black,scale=0.7] {$\bullet$};
\draw[thick,red] (B2)  node[black,scale=0.7] {$\bullet$} -- node[red,above] {$J_{\beta}$} ++ (1.5,0) node[black,scale=0.7] {$\bullet$};

\draw[thick] (B2) ++ (1.5,0) node[scale=0.7] {$\bullet$} -- ++ (35:1.5) ++ (35:0.35);
\draw[thick] (B2) ++ (1.5,0) node[scale=0.7] {$\bullet$} -- ++ (-35:1.5) ++ (-35:0.35);

\end{tikzpicture}
\caption{
 4-qubit candy graph: spin and intertwiner decomposition.}
\label{fig:4candygraph}
\end{figure}
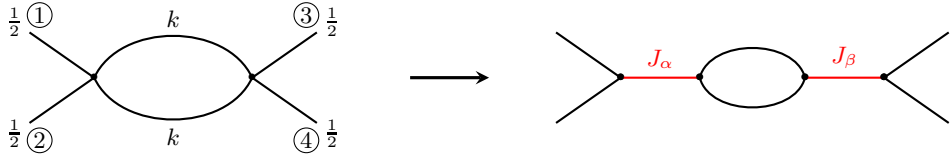

For each vertex,  $v=\alpha$ or $v=\beta$, we recouple the two boundary spins, leading to the bouquet spin basis. Here the bouquet spin $J_{v}$ can take two values, 0 or 1, and entirely determines the intertwiner state:
\be
\cH_{v}=\textrm{Inv}[\cV_{{\f12}}\otimes \cV_{{\f12}}\otimes \cV_{k}\otimes \cV_{k}]=
\C|J=0\ra\oplus \C|J=1\ra\,,\quad
\dim\cH_{v}=2
\,,
\ee
so that bulk spin network basis states are labelled by the two bouquet spins\footnotemark:
\be
\cH_{bulk}=\bigoplus_{J_{\alpha},J_{\beta}\in\{0,1\}}\C|J_{\alpha},J_{\beta}\ra
\,,\quad
\dim\cH_{bulk}=4
\,.
\ee
\footnotetext{
It might seem awkward that the dimension of the bulk Hilbert space is here (much) smaller than the dimension of the boundary Hilbert space: it would be weird to talk about a bulk-to-boundary coarse-graining in that situation. This is due to the extremely simple structure of the bulk graph. In fact, the dimension of the bulk Hilbert space increases exponentially with the number of bulk vertices (actually, more precisely, the number of independent cycles in the bulk graph as shown in \cite{Livine:2007sy}). For instance, merely pinching the loop to create an extra bulk vertices would increase the dimension of the bulk Hilbert space to $\dim\cH_{bulk}=2\times (2k+1)\times2$, which would be larger from $\dim\cH_{\pp}=2^{4}$ as soon as the spin $j$ around the loop is larger than 2.
}
The boundary Hilbert space consists in the tensor product of the four spin-$\f12$ spaces, i.e. it is made of four qubits,
\be
\cH_{\pp}=\big{(}\cV_{\f12}\big{)}^{\otimes 4}
\,,\qquad
\dim\cH_{\pp}=2^{4}\,.
\ee

Let us consider an arbitrary spin network state,
\be
\psi=\sum_{J_{\alpha},J_{\beta}}C_{J_{\alpha},J_{\beta}}|J_{\alpha},J_{\beta}\ra
\in\cH_{bulk}
\,.
\ee
The corresponding wave-function defines a boundary map, mapping the bulk holonomy along the two links of the inner loop to a boundary state in $\cH_{\pp}$:
\be
|\psi(g_{1},g_{2})\ra=\sum_{a_{i},b_{i}}
(-1)^{k-a_{1}}(-1)^{k-a_{2}}D^{k}_{a_{1}b_{1}}(g_{1})D^{k}_{a_{2}b_{2}}(g_{2})
\la (k,-a_{1})(k,-a_{2})|J_{\alpha}\ra
\la (k,b_{1})(k,b_{2})|J_{\beta}\ra
\,\in\cH_{\pp}\,.
\ee
The boundary density matrix is obtained by integrating over the bulk holonomy:
\be
\rho_{\pp}[\psi]=\int \rd g_{1}\rd g_{2}\,|\psi(g_{1},g_{2})\ra\la \psi(g_{1},g_{2})|
\,\in\textrm{End}[\cH_{\pp}]
\,.
\ee
The integration over the $\SU(2)$ group elements is straightforward to compute and yields:
\be
\rho_{\pp}[\psi]=
\sum_{J_{\alpha},J_{\beta}}|C_{J_{\alpha},J_{\beta}}|^{2}
\,\f{\id_{J_{\alpha}}}{2J_{\alpha}+1}\otimes \f{\id_{J_{\beta}}}{2J_{\beta}+1}
\,,
\ee
where $\id_{J}$, for $J=0$ and $J=1$, is the projector on the subspace of total spin $J$ in the tensor product of two qubits $(V_{{\f12}})^{\otimes 2}$.
This confirms that a pure bulk spin network state leads naturally to a mixed boundary state. Moreover, due to the simple structure of the boundary in the present example, the induced boundary density matrix  carries no entanglement between the pair of boundary edges attached to the vertex $\alpha$ and the pair attached to the vertex $\beta$.

\subsection{The six-qubit candy graph}

We can upgrade the previous  example by enriching the structure of the boundary intertwiner thereby allowing for the possibility of non-trivial entanglement between the boundary edges attached to the two vertices.
Instead of attaching two boundary edges to each vertex, we now connect three boundary edges to each vertex. We still fix the spins on the boundary edges to $j_{1}=..=j_{6}=\f12$, as well as on the inner loop to $k$ and $k+\f12$ (with the half-integer shift to account for the extra half-spin on the boundary) for $k>0$.
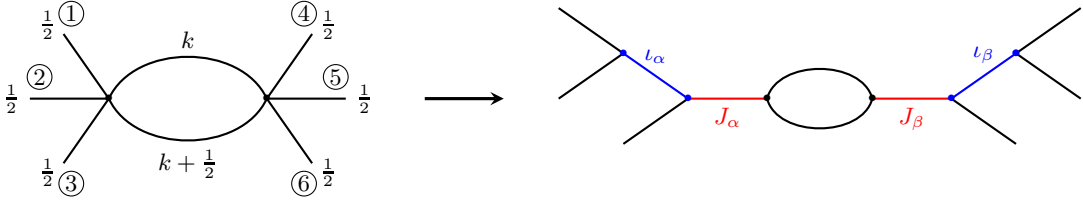
\begin{figure}[hbt!]
\centering
\begin{tikzpicture}[scale=0.7]

\coordinate (A1) at (0,0);
\coordinate (A2) at (3,0);

\draw[thick,in=115,out=65,rotate=0] (A1)  to node[above] {$k$} (A2)  node[scale=0.7] {$\bullet$} (A2) to[out=245,in=-65] node[below] {$k+\f12$} (A1) node[scale=0.7] {$\bullet$};

\draw[thick] (A1) -- ++ (125:1.5) ++ (150:0.35) node {$\f12$} ++ (25:0.5) node {$\circled{1}$};
\draw[thick] (A1) --node[very near end,above] {$\circled{2}$} ++ (180:1.5) ++ (180:0.35) node {$\f12$};
\draw[thick] (A1) -- ++ (235:1.5) ++ (210:0.35) node {$\f12$} ++ (-25:0.5) node {$\circled{3}$};

\draw[thick] (A2) -- ++ (55:1.5) ++ (30:0.35) node {$\f12$} ++ (155:0.5) node {$\circled{4}$};
\draw[thick] (A2) --node[very near end,above] {$\circled{5}$} ++ (0:1.5) ++ (0:0.35) node {$\f12$};
\draw[thick] (A2) -- ++ (-55:1.5) ++ (-30:0.35) node {$\f12$} ++ (-155:0.5) node {$\circled{6}$};

\draw[->,>=stealth,very thick] (6,0) -- (7.5,0);

\coordinate (B1) at (12.5,0);
\coordinate (B2) at (14.5,0);

\draw[thick,in=105,out=75,rotate=0] (B1)  to (B2) to[out=255,in=-75] (B1);

\draw[thick] (B1) ++ (-1.5,0) coordinate(B3);
\draw[thick] (B1) ++ (-1.5,0) -- ++ (215:1.5);

\draw[thick,blue] (B3) -- node[midway,above=1.5] {$\iota_{\alpha}$} ++ (145:1.5) ;
\draw[thick] (B3) ++ (145:1.5) -- ++ (215:1.5);
\draw[thick] (B3) ++ (145:1.5)  node[blue,scale=0.7] {$\bullet$} -- ++ (145:1.5);

\draw[thick,red] (B1) node[black,scale=0.7] {$\bullet$} -- node[red,below] {$J_{\alpha}$} ++ (-1.5,0) node[scale=0.7,blue] {$\bullet$} ;
\draw[thick,red] (B2) node[black,scale=0.7] {$\bullet$} -- node[red,below] {$J_{\beta}$} ++ (1.5,0) ;

\draw[thick] (B2) ++ (1.5,0) coordinate(B4);
\draw[thick] (B2) ++ (1.5,0) node[scale=0.7,blue] {$\bullet$}  -- ++ (-35:1.5);

\draw[thick,blue] (B4) -- node[midway,above=1.5] {$\iota_{\beta}$} ++ (35:1.5) node[scale=0.7] {$\bullet$};
\draw[thick] (B4) ++ (35:1.5) -- ++ (35:1.5);
\draw[thick] (B4) ++ (35:1.5)  node[blue,scale=0.7] {$\bullet$} -- ++ (-35:1.5);

\end{tikzpicture}
\caption{
The one-loop 6-qubit candy graph and intertwiner basis.}
\label{fig:6candygraph}
\end{figure}

The bulk Hilbert space thus consists in the tensor product of the spaces of intertwiners living at the two vertices $\alpha$ and $\beta$:
\be
\cH_{bulk}= \cH_{\alpha}\otimes \cH_{\beta}\,,\qquad
\cH_{\alpha}
= \cH_{\beta}
=
\textrm{Inv}\big{[}(\cV_{{\f12}})^{\otimes 3}\otimes \cV_{k}\otimes \cV_{k+\f12}\big{]}
\,.
\ee
For each vertex, $v=\alpha$ and $v=\beta$, we unfold the intertwiner space by recoupling the three spins $\f12$ together into the bouquet spin  $J_{v}$, as drawn on fig.\ref{fig:6candygraph}.
Since the 3-valent intertwiner between the spins $k$, $k+\f12$ and $\f12$ is unique (and given by the corresponding Clebsh-Gordan coefficients), we can put aside this bulk component of the intertwiner and focus on the boundary component of the intertwiner.
Then, since the tensor product of three  spins $\f12$ decomposes as
\be
(\cV_{{\f12}})^{\otimes 3}=\cV_{\f32}\otimes 2 \cV_{\f12}\,,
\ee
the intertwiner space is three-dimensional:
\be
\cH_{v}
=
\C|J_{v}=\f32\ra \oplus \C|J_{v}=\f12,\iota_{v}=0\ra\oplus \C|J_{v}=\f12,\iota_{v}=1\ra
\,.
\ee
The extra index $\iota\in\{0,1\}$ when the three qubits recouple to the bouquet spin $J=\f12$ label the degeneracy in the decomposition of the tensor product. As depicted on fig.\ref{fig:6candygraph}, we can simply take it as the spin recoupling for the first two qubits (boundary edges 1 and 2 for the vertex $\alpha$ and boundary edges 4 and 5 for the vertex $\beta$). In that case, we can extend the convention for the intertwiner basis state $|J,\iota\ra$ even to $J=\f32$, in which case the extra label is allowed to take a single value $\iota=1$.

Bulk spin network basis states are then defined by the choice of the two intertwiner basis states at $v=\alpha$ and $v=\beta$:
\be
\cH_{bulk}
=
\bigoplus_{\{J_{v},\iota_{v}\}}\C|J_{v},\iota_{v}\ra
\,,\quad
\dim \cH_{bulk}=3\times 3=9\,.
\ee
The boundary Hilbert space simply consists in 6 qubits, from which we also use the bouquet spin basis:
\be
\cH_{\pp}
=
\big{(}\cV_{\f12}\big{)}^{\otimes 6}
=
\cH^{\pp}_{\alpha}\otimes \cH^{\pp}_{\beta}
\,,
\quad
\cH^{\pp}_{\alpha}= \cH^{\pp}_{\beta}=
\big{(}\cV_{\f12}\big{)}^{\otimes 3}
=
\bigoplus_{J=\f12,\f32} \cV_{J}\otimes\cN_{J}\,,
\quad\textrm{with}\quad
\cN_{J}=
\textrm{Inv}\Big{[}
\cV_{J}\otimes\big{(}\cV_{\f12}\big{)}^{\otimes 3}
\Big{]}\,,
\ee
\be
\textrm{where}\quad
\dim\cN_{\f12}=2
\,,\quad
\dim\cN_{\f32}=1
\,,\quad
\dim\cH^{\pp}_{\alpha}=\dim\cH^{\pp}_{\beta}=
2\times2+4\times 1 =2^{3}\,.
\ee
Let us consider a general spin network state (with fixed spins as we have assumed so far) on this candy graph with six boundary edges:
\be
\psi=\sum_{\{J_{v},\iota_{v}\}_{v=\alpha,\beta}}
C^{J_{\alpha},J_{\beta}}_{\iota_{\alpha},\iota_{\beta}}\,|(J_{\alpha},\iota_{\alpha})\,(J_{\beta},\iota_{\beta})\ra\,.
\ee
The induced boundary density matrix, obtained after integration over the bulk holonomies, is:
\be
\rho_{\pp}[\psi]
=
\sum_{J_{\alpha},J_{\beta}}
\,\f{\id_{J_{\alpha}}}{2J_{\alpha}+1}\otimes \f{\id_{J_{\beta}}}{2J_{\beta}+1}
\otimes
\rho_{J_{\alpha},J_{\beta}}
\,,
\ee
where the multiplicity matrix encodes the data about the intertwiners:
\be 
\rho_{J_{\alpha},J_{\beta}}
\,=\,
\sum_{\{\iota_{v},\tiota_{v}\}}
C^{J_{\alpha},J_{\beta}}_{\tiota_{\alpha},\tiota_{\beta}}
\overline{C^{J_{\alpha},J_{\beta}}_{\iota_{\alpha},\iota_{\beta}}}
\Big{|}(J_{\alpha},\tiota_{\alpha})(J_{\beta},\tiota_{\beta})\Big{\ra}\Big{\la}(J_{\alpha},\iota_{\alpha})(J_{\beta},\iota_{\beta})\Big{|}
\quad\in\,\textrm{End}\big{[}\cN_{J_{\alpha}}\otimes \cN_{J_{\beta}}\big{]}
\,.
\ee
This is always a rank-one matrix and does not lead to entanglement between the boundary edges (1,2,3) and (4,5,6).

\medskip

If we want to obtain non-trivial multiplicity matrices, i.e. of higher rank, one has to allow for non-trivial bulk components of the intertwiners. To this purpose, we must consider a (slightly) more complicated bulk graph with three bulk edges connecting the two vertices. We can assume that the spins on all the edges, both on the boundary and in the bulk, are fixed to, say, $j_{1}=..=j_{9}=\f12$. If we look at the vertex $v$, which can be $\alpha$ or $\beta$, the 6-valent intertwiner can be unfolded into the bouquet spin basis. As depicted on fig.\ref{fig:6candygraph3link}, an intertwiner basis state is now labeled by the bouquet spin $J_{v}$, a multiplicity index $\iota^{\pp}_{v}\in\{0,1\}$ for the boundary component of the intertwiner (which can be taken as the recoupled spin of the edges 1 and 2) and a multiplicity index $\iota^{o}_{v}\in\{0,1\}$ for the bulk component of the intertwiner (which can be taken as the recoupled spin of the edges 4 and 5).
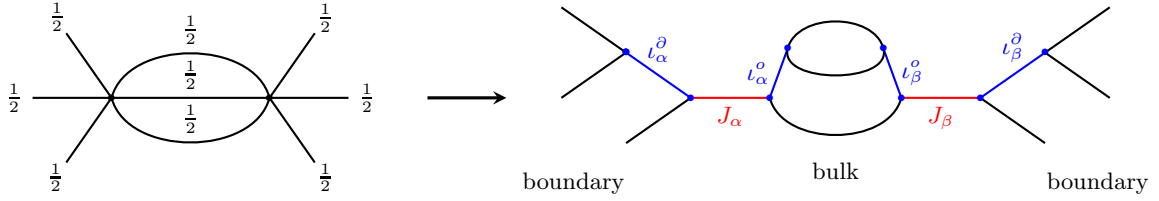
\begin{figure}[hbt!]
\centering
\begin{tikzpicture}[scale=0.7]

\coordinate (A1) at (0,0);
\coordinate (A2) at (3,0);

\draw[thick,in=105,out=75,rotate=0] (A1)  to node[above] {$\f12$} (A2)  node[scale=0.7] {$\bullet$} (A2) to[out=255,in=-75] node[above] {$\f12$} (A1) node[scale=0.7] {$\bullet$};
\draw[thick] (A1) --node[above] {$\f12$} (A2);

\draw[thick] (A1) -- ++ (125:1.5) ++ (120:0.35) node {$\f12$};
\draw[thick] (A1) -- ++ (180:1.5) ++ (180:0.35) node {$\f12$};
\draw[thick] (A1) -- ++ (235:1.5) ++ (235:0.35) node {$\f12$};

\draw[thick] (A2) -- ++ (55:1.5) ++ (55:0.35) node {$\f12$};
\draw[thick] (A2) --++ (0:1.5) ++ (0:0.35) node {$\f12$};
\draw[thick] (A2) -- ++ (-55:1.5) ++ (-55:0.35) node {$\f12$};

\draw[->,>=stealth,very thick] (6,0) -- (7.5,0);

\coordinate (B1) at (12.5,0);
\coordinate (B2) at (15,0);

\draw[thick,blue] (B1) -- node[midway,left]{$\iota_{\alpha}^{o}$} ++(70:1) coordinate (C);
\draw[thick,blue] (B2) -- node[midway,right]{$\iota_{\beta}^{o}$} ++(110:1) coordinate (D);

\draw[thick] (B2) to[out=255,in=-75] node[below=7] {bulk} (B1);

\draw[thick] (C) node[blue,scale=0.7] {$\bullet$} to[out=255,in=-75] (D) node[blue,scale=0.7] {$\bullet$} to[out=105,in=75] (C);

\draw[thick] (B1) ++ (-1.5,0) coordinate(B3);
\draw[thick] (B1) ++ (-1.5,0) -- ++ (215:1.5)++(-1,-0.75) node {boundary};

\draw[thick,blue] (B3) -- node[midway,above=1.5] {$\iota_{\alpha}^{\pp}$} ++ (145:1.5) node[scale=0.7] {$\bullet$};
\draw[thick] (B3) ++ (145:1.5)  node[blue,scale=0.7] {$\bullet$} -- ++ (215:1.5);
\draw[thick] (B3) ++ (145:1.5)  node[blue,scale=0.7] {$\bullet$} -- ++ (145:1.5);

\draw[thick,red] (B1) node[blue,scale=0.7] {$\bullet$} -- node[red,below] {$J_{\alpha}$} ++ (-1.5,0) node[scale=0.7,blue] {$\bullet$} ;
\draw[thick,red] (B2)  node[blue,scale=0.7] {$\bullet$} -- node[red,below] {$J_{\beta}$} ++ (1.5,0);

\draw[thick] (B2) ++ (1.5,0) coordinate(B4);
\draw[thick] (B2) ++ (1.5,0) node[blue,scale=0.7] {$\bullet$} -- ++ (-35:1.5) ++(1,-0.75) node {boundary};

\draw[thick,blue] (B4) -- node[midway,above=1.5] {$\iota_{\beta}^{\pp}$} ++ (35:1.5);
\draw[thick] (B4) ++ (35:1.5) -- ++ (35:1.5);
\draw[thick] (B4) ++ (35:1.5)  node[blue,scale=0.7] {$\bullet$} -- ++ (-35:1.5);

\end{tikzpicture}
\caption{
The triple link 6-qubit candy graph and intertwiner basis.}
\label{fig:6candygraph3link}
\end{figure}

The main consequence of adding bulk structure is to increase the dimension of the bulk Hilbert space:
\be
\cH_{bulk}=\cH_{\alpha}\otimes\cH_{\beta}\,,
\quad
\cH_{v}
=
\textrm{Inv}\big{[}(\cV_{\f12})^{\otimes 6}\big{]}
=
\bigoplus_{J_{v},\iota_{v}^{\pp},\iota_{v}^{o}}\C|J_{v},\iota_{v}^{\pp},\iota_{v}^{o}\ra
\,\quad
\dim\cH_{bulk}=(1+2\times 2)^{2}=25
\,.
\ee
On the other hand, the boundary Hilbert space is left unchanged. This much higher dimensionality of the bulk Hilbert space allows for finer structure of the bulk state and induced entangled on the boundary. Indeed, a generic spin network state decomposes as:
\be
\psi=
\sum_{\{J_{v},\iota_{v}\}_{v=\alpha,\beta}}
C^{J_{\alpha},J_{\beta}}_{\iota^{\pp}_{\alpha},\iota_{\alpha}^{o},\iota^{\pp}_{\beta},\iota_{\beta}^{o}}\,
\Big{|}(J_{\alpha},\iota^{\pp}_{\alpha},\iota_{\alpha}^{o})\,(J_{\beta},\iota^{\pp}_{\beta},\iota_{\beta}^{o})\Big{\ra}
\,.
\ee
Compared to the previous case of the one-loop candy graph, the bulk part of the intertwiners $\iota^{o}_{v}$ is not seen by the boundary state. This bulk data ``hidden'' from the boundary creates entanglement between the two bouquets of boundary edges. Indeed the induced boundary density matrix can be computed as:
\be
\rho_{\pp}[\psi]
=
\sum_{J_{\alpha},J_{\beta}}
\,\f{\id_{J_{\alpha}}}{2J_{\alpha}+1}\otimes \f{\id_{J_{\beta}}}{2J_{\beta}+1}
\otimes
\rho_{J_{\alpha},J_{\beta}}
\,,
\ee
where the multiplicity matrix encodes the data about the intertwiners:
\be 
\rho_{J_{\alpha},J_{\beta}}
\,=\,
\sum_{\{\iota_{v}^{\pp},\tiota_{v}^{\pp}\}}
\left(
\sum_{\{\iota_{v}^{o}\}}
C^{J_{\alpha},J_{\beta}}_{\tiota^{\pp}_{\alpha},\iota_{\alpha}^{o},\tiota^{\pp}_{\beta},\iota_{\beta}^{o}}
\overline{C^{J_{\alpha},J_{\beta}}_{\iota^{\pp}_{\alpha},\iota_{\alpha}^{o},\iota^{\pp}_{\beta},\iota_{\beta}^{o}}}
\right)\,
\Big{|}(J_{\alpha},\tiota_{\alpha})(J_{\beta},\tiota_{\beta})\Big{\ra}\Big{\la}(J_{\alpha},\iota_{\alpha})(J_{\beta},\iota_{\beta})\Big{|}
\quad\in\,\textrm{End}\big{[}\cN_{J_{\alpha}}\otimes \cN_{J_{\beta}}\big{]}
\,.
\ee
The partial trace over the bulk components of the intertwiners lead to a higher rank of the multiplicity matrix, reflecting the induced entanglement between the boundary edges attached to $\alpha$ and the ones attached to the vertex $\beta$.
A simple example is, choosing that both intertwiners have support exclusively on the bouquet spins $J_{\alpha}=J_{\beta}=\f12$, and form a Bell-like state:
\be
\psi_{Bell}=
\f1{\sqrt{2}}\,\big{(}|(\f12,0,0)(\f12,1,1)\ra-|(\f12,1,1)(\f12,0,0)\ra\big{)}\,,
\ee
leading to the induced density matrix:
\be
\rho_{\pp}[\psi_{Bell}]
=
\f{\id_{\f12}}{2}\otimes \f{\id_{\f12}}{2}\otimes \rho_{\cN}\,,\quad
\rho_{\cN}
=
|(\f12,0)(\f12,1)\ra\la(\f12,0)(\f12,1)|+|(\f12,1)(\f12,0)\ra\la(\f12,1)(\f12,0)|\,,
\ee
where the multiplicity matrix now has rank two.
This perfectly illustrates how tracing out the bulk degrees of freedom leads to a mixed state on the boundary, or in more physical terms, how correlations between bulk intertwiners leads to entanglements between boundary edges.

\section*{Conclusion \& Outlook}


In the context of the quest for understanding the holographic nature of the gravitational interaction and of quantum gravity, it is essential to investigate the bulk-boundary relation and interplay. This goes both ways: on the one hand, we need to understand the boundary modes and dynamics induced by the bulk degrees of freedom, and on the other hand, we need to understand how boundary conditions propagate within and throughout the bulk at both classical and quantum levels. Such holographic mapping between bulk and boundary theories needs to be achieved at multiple levels:  the symmetry groups, the dynamics, the quantum states, the algebra of observables.

Here, in order to start analyzing the potential holographic behavior of loop quantum gravity, we introduced explicit  2d boundaries to the 3d space, i.e. space-time corners. This 2d boundary admits a Hilbert space of boundary states, understood as quantum boundary conditions. Then loop quantum gravity's spin network states for the bulk geometry become what we call {\it boundary maps}, that is wave-functions still depending on bulk fields or degrees of freedom but valued in the boundary Hilbert space (instead of $\C$ for standard quantum mechanics). In some sense, bulk wave-functions can be interpreted as quantum circuits acting on the boundary states.
For spin network states, the bulk degrees of freedom are the $\SU(2)$ holonomies of the Ashtekar-Barbero connection along the graph links, while the boundary states are the spin states living on the spin network open edges puncturing the 2d boundary surface.
As expected, the squared norm of the bulk wave-function using the scalar product of the boundary Hilbert space gives the probability distribution for the bulk holonomies.
The new feature is that one can trace over the bulk by integrating over the bulk holonomies and obtain a density matrix for the boundary states. This {\it boundary density matrix} encodes all that we can know about the quantum state of geometry from probing the boundary if we do not have access to any bulk observable. For a pure bulk state, we typically obtain a mixed boundary state. This realizes a bulk-to-boundary coarse-graining.

Our main result is the proof that any gauge-covariant boundary density matrix for an arbitrary number of boundary edges can be induced from a pure spin network state on a simple bulk graph consisting from a single vertex connecting all the boundary edges to a single bulk loop.  In quantum information jargon, this universal reconstruction process actually purifies arbitrary mixed boundary states into pure bulk states.

\medskip

We further analyzed the algebraic structure of induced boundary density matrices, more precisely how intertwiner correlations, i.e. entanglement between bulk volume excitations, get reflected by the boundary density matrix.
This should be considered as part of the larger program of bulk tomography through boundary observables in loop quantum gravity. Hopefully, the basic tools introduced here should allow a more systematic study of how far one can see into the bulk and how much one observer on the boundary can know about the bulk spin network graph.
For instance, we would like to study in more details the relation between between boundary edge entanglement  and bulk intertwiner entanglement and quantify in a precise and explicit manner their difference.

These questions are at the kinematical level. Our hope is  more ambitious and we would like to tackle the  spin network dynamics and reformulate it in light of the bulk-boundary relation. This means projecting the bulk dynamics onto the boundary and write it in terms of boundary evolution operators. Loop quantum gravity's dynamics would then read in terms of completely positive maps\footnotemark{} acting on the boundary density matrices.
\footnotetext{
Mathematically, any evolution or measurement can be written as a completely positive map (CP map) \cite{Choi:1975,Wilde:2011npi}, which admits an operator-sum representation in terms of Kraus operators $\{ E_k, \, k=1,2,\cdots \}$:
\be \label{eq:KrausOperators}
\cE(\rho)
=
\sum_{k} E_k \, \rho \, E_k^{\dagger}
\,, \qquad
\sum_{k} E_k^{\dagger} \, E_k \leq \id
\,.
\nn
\ee
The case when $\sum_{k} E_k^{\dagger} \, E_k = \id$ are completely positive trace preserving map (CPTP map), which leave invariant the trace of quantum states. We wish to describe boundary evolution and measurements in loop quantum gravity in terms of  CPTP maps.}
Through this, the goal is to investigate in depth the implementation of the holographic principle in loop quantum gravity, and parallely move forward in the study of the coarse-graining of the theory and its renormalization flow from the Planck scale to ours. For these purposes, a general formulation in terms of boundary density matrices seems better suited to the analysis of the dynamics, measurements and coarse-graining than pure spin network states.

\section*{Acknowledgement}
Q.C. is financially supported by the China Scholarship Council.

\appendix


\bibliographystyle{bib-style}
\bibliography{LQG}

\end{document}